\documentclass{article}
\usepackage{times,natbib}
\setcitestyle{authoryear,round,citesep={;},aysep={,},yysep={;}}
\usepackage[textwidth=5.5in,textheight=9in]{geometry}
\usepackage{amsmath,amsfonts,bm}

\def\eqref#1{equation~\ref{#1}}

\def\1{\bm{1}}

\DeclareMathAlphabet{\mathsfit}{\encodingdefault}{\sfdefault}{m}{sl}
\SetMathAlphabet{\mathsfit}{bold}{\encodingdefault}{\sfdefault}{bx}{n}

\newcommand{\E}{\mathbb{E}}

\newcommand{\Var}{\mathrm{Var}}

\usepackage{float}
\usepackage{hyperref}
\usepackage{url}
\usepackage{longtable}
\hypersetup{hidelinks}

\usepackage{colortbl}
\usepackage{amsmath,amssymb,amsthm}

\usepackage{booktabs}
\usepackage{csvsimple-l3}
\ExplSyntaxOn
\tl_new:N \l__tablecount_digits_tl
\int_new:N \l__tablecount_left_int
\NewDocumentCommand \tablecount { m }
  {
    \tl_set:Ne \l__tablecount_digits_tl {#1}
    \int_set:Nn \l__tablecount_left_int { \tl_count:N \l__tablecount_digits_tl }
    \tl_map_inline:Nn \l__tablecount_digits_tl
      {
        ##1
        \int_decr:N \l__tablecount_left_int
        \bool_lazy_and:nnT
          { \int_compare_p:nNn { \l__tablecount_left_int } > { 0 } }
          { \int_compare_p:nNn { \int_mod:nn { \l__tablecount_left_int } { 3 } } = { 0 } }
          { {,} }
      }
  }
\ExplSyntaxOff
\usepackage{array}
\usepackage{enumitem}

\usepackage{graphicx}
\usepackage{adjustbox}
\usepackage{listings}
\usepackage{xspace}
\usepackage[normalem]{ulem}
\DeclareRobustCommand{\revdel}[1]{{\setlength{\fboxsep}{0.6pt}%
  \colorbox{red!20}{\sout{#1}}}}

\DeclareRobustCommand{\bmrevdel}[1]{\textcolor{violet}{\sout{#1}}}

\usepackage{tikz}
\usetikzlibrary{arrows.meta,positioning,fit,calc}
\usepackage{pifont}  
\usepackage{mdframed}
\newmdenv[backgroundcolor=orange!18,hidealllines=true,nobreak=true,
  innerleftmargin=3pt,innerrightmargin=3pt,
  innertopmargin=3pt,innerbottommargin=3pt,
  skipabove=3pt,skipbelow=3pt]{tonereview}
\newmdenv[backgroundcolor=blue!8,hidealllines=true,
  innerleftmargin=3pt,innerrightmargin=3pt,
  innertopmargin=3pt,innerbottommargin=3pt,
  skipabove=3pt,skipbelow=3pt]{bytemarkrevision}
\newmdenv[backgroundcolor=red!15,hidealllines=true,
  innerleftmargin=3pt,innerrightmargin=3pt,
  innertopmargin=3pt,innerbottommargin=3pt,
  skipabove=3pt,skipbelow=3pt]{searchreview}

\newtheorem{proposition}{Proposition}
\newtheorem{theorem}{Theorem}

\newcommand{\ByteMark}{ByteMark\xspace}

\newcommand{\PRF}{F}   

\providecommand{\E}{\mathbb{E}}
\providecommand{\Var}{\mathrm{Var}}

\providecommand{\1}{\mathbf{1}}

\lstdefinestyle{pycode}{
  language=Python,
  basicstyle=\ttfamily\footnotesize,
  keywordstyle=\color{blue!70!black}\bfseries,
  commentstyle=\color{gray},
  stringstyle=\color{red!60!black},
  numbers=left,
  numberstyle=\tiny\color{gray},
  breaklines=true,
  showstringspaces=false,
  frame=single,
  framerule=0.3pt,
  xleftmargin=1.5em,
  columns=fullflexible
}
\usepackage{algorithm}
\usepackage{algpseudocode}

\makeatletter
\newcommand{\historicaltable}[1]{%
  \begingroup
  \let\historicalmakecaption\@makecaption
  \long\def\@makecaption##1##2{%
    \historicalmakecaption{\textcolor{red}{##1}}{%
      \textcolor{red}{\textbf{Historical data (layout only).} ##2}}}%
  \input{#1}%
  \endgroup
}
\makeatother


\newcommand{\experimenttable}[1]{%
  \ifcsname papertable@#1\endcsname
    \csname papertable@#1\endcsname
  \else
    \PackageError{ByteMark}{Table selection missing: #1}{Build with latexmk to refresh table selections.}%
  \fi
}
\newcommand{\maintextdetailtable}[1]{%
  \begingroup
  \let\detailSavedLabel\label
  \renewcommand{\label}[1]{\detailSavedLabel{main-detail:##1}}%
  \let\detailSavedTable\table
  \let\detailSavedEndTable\endtable
  \renewenvironment{table}[1][]{\detailSavedTable[H]}{\detailSavedEndTable}%
  \renewenvironment{table*}[1][]{\begin{table}[H]}{\end{table}}%
  \setlength{\tabcolsep}{4pt}%
  \csname papertablemain@#1\endcsname
  \endgroup
}
\newcommand{\maintexttable}[1]{%
  \ifcsname papertablemain@#1\endcsname
    \csname papertablemain@#1\endcsname
  \else
    \PackageError{ByteMark}{Main-text table selection missing: #1}{Build with latexmk.}%
  \fi
}

\DeclareRobustCommand{\claudedel}[1]{\textcolor{red}{\sout{#1}}}
\DeclareRobustCommand{\claudeadd}[1]{\textcolor{red}{#1}}
\title{ByteMark: A Unified Byte-Level Watermark\\
for Detection Across Language Models}

\author{Xiaopu Wang\\
\texttt{xmw5221@psu.edu}
\and
Runze Li\\
\texttt{rli4@psu.edu}}

\date{}

\begin{document}

\maketitle

\begin{abstract}
As watermarking of LLM-generated text enters deployment, users increasingly
need to check text whose source is unknown. Existing detectors read the
watermark through a specific tokenizer or embedding pipeline, and even if all
providers adopted the same watermarking method, differences in these
representations would leave their watermarks incompatible, and the text alone
does not reveal which representation produced it. We propose \ByteMark, a
{\fontencoding{T1}\selectfont\textit{unified watermarking scheme}} that
defines its signal on the UTF-8 bytes of published text. Models with different
tokenizers embed this signal under one public specification and key, and a
single detector verifies the text from the text and key alone. When the
tokenizer or embedding model of a baseline detector is replaced, its detection
degrades sharply, whereas \ByteMark, which uses neither, retains its detection
rate; \ByteMark also remains detectable when participating models continue or
rewrite the text.
\end{abstract}

\begingroup
\raggedbottom
\setlength{\textfloatsep}{12pt plus 2pt minus 2pt}
\setlength{\intextsep}{10pt plus 2pt minus 2pt}
\setlength{\floatsep}{10pt plus 2pt minus 2pt}
\setlength{\abovecaptionskip}{6pt}
\let\mainSavedTable\table
\let\mainSavedEndTable\endtable
\renewenvironment{table}[1][]{\mainSavedTable[!htbp]}{\mainSavedEndTable}

\section{Introduction}
\label{sec:introduction}

Large language models (LLMs) generate text for question answering, writing
assistance, and other instruction-driven
tasks~\citep{brown2020language,ouyang2022training,dathathri2024synthid}. When
generated text is presented as human-written work, it can undermine authorship
assessment and mislead readers about the origin of content
\citep{kirchenbauer2023watermark,kuditipudi2023robust,dathathri2024synthid}.
LLMs can also hallucinate plausible but factually incorrect content,
potentially misleading readers \citep{lin2022truthfulqa}. These risks also
extend to training-data curation: repeatedly training on generated text can
lead to \emph{model collapse}, reducing the diversity and performance of
subsequent models~\citep{shumailov2024curse,kazdan2025collapse}. Identifying
whether text was generated by an LLM can help assess claims of human
authorship and track the composition of training data. Publication venues
have already taken steps to identify LLM-generated text: ICML has used prompt injection to detect LLM
use in peer reviews, and the NeurIPS 2026 Position Paper Track has screened
submissions for AI-generated content
\citep{icml2026reviewwatermark,neurips2026aipapers}.

Such identification nonetheless remains difficult. An instruction hidden in a
paper, such as a prompt injection, can be found and removed by the
reviewer~\citep{liu2026incontext}. Zero-shot detectors such as DetectGPT rely
on linguistic differences and can lose accuracy with a surrogate scoring model
or under a different generator and decoding
strategy~\citep{mitchell2023detectgpt,dugan2024raid}, and human readers do not
reliably distinguish LLM-generated responses from human
ones~\citep{jones2024turing}.

A complementary approach is watermarking. By deliberately embedding a
recognizable signal into generated text, watermarking gives verifiers evidence
that can be tested, without relying solely on incidental
differences between human and generated writing
\citep{kirchenbauer2023watermark,kuditipudi2023robust}. Regulation has begun to mandate such marking: the European Union AI
Act, applicable from August 2026, requires that generated outputs be ``marked
in a machine-readable format and detectable as artificially
generated''~\citep[Art.~50]{eu2024aiact}. We therefore expect providers of generative models to \emph{participate} in some form of watermarking.

Existing watermarks embed this signal in tokens or in sentences. Token-level
methods alter token selection through logit bias or keyed sampling, and their
detectors aggregate keyed token statistics
\citep{kirchenbauer2023watermark,kuditipudi2023robust,dathathri2024synthid}.
These detectors do not need the generator's weights and are often described
as model-agnostic
\citep{kirchenbauer2023watermark,chen2025mcmark,wu2024dipmark}, yet their
signal is defined on token IDs, which differ across tokenizers for the same
text. Sentence-level methods map candidate sentences to a semantic embedding
and resample until a candidate satisfies the watermark condition; the
discarded candidates reduce computational efficiency and increase latency
\citep{hou2024semstamp,hou2024ksemstamp,li2025simmark,li2026alimark}. Their
detectors require the embedding pipeline used to define the watermark, and some
also require stored artifacts derived from it, such as cluster centroids
\citep{hou2024ksemstamp,li2025simmark}. In both families, detection therefore
depends on a representation that the text does not reveal and that, for
closed-source models, may not be publicly available
(Section~\ref{sec:problem}).

This dependence does not match practical verification. An area chair checking
a peer review, an instructor assessing student work, or a curator filtering
training data typically receives the text without knowing which model, if any,
produced it. Whoever runs the test, the reader or a service acting for the
reader, holds the text and the watermark key but not the generating model or
its components. We therefore distinguish two detection tasks:

\begin{description}[style=nextline,leftmargin=2em,nosep,font=\normalfont\itshape]
 \item[Presumed-source verification.] Verifying a watermark associated with a presumed source, given its
 tokenizer or embedding pipeline.
 \item[Unknown-source detection.] Detecting a watermark from any participating
 model without knowing how the text was produced.
\end{description}
In unknown-source detection, a negative result under one assumed
representation cannot establish that a passage is unwatermarked unless the
verifier has searched among candidates known to include the generating representation.

These representations multiply as watermarking enters deployment.
Token-level watermarking is already deployed in Gemini and Claude
\citep{dathathri2024synthid,anthropic2026watermark}. \emph{Even if all
providers adopted the same watermarking method, the method alone would not
make their watermarks compatible}, because each provider applies it through
its own tokenizer or embedding pipeline. A single document may also combine
several models: multi-agent systems assign the writing and revision of shared
text to agents that need not run the same model~\citep{wu2024autogen,wang2025moa}.
Training corpora also collect text written over many years, during which
tokenizers and embedding pipelines change.

We therefore seek a common watermark that participating providers embed under
a shared key, with three properties.

\begin{itemize}[leftmargin=2em,nosep]
 \item \textbf{Model independence.} Detection should depend only on the
 text, the watermark key, and a fixed public specification, so that one
 detection rule applies across participating models and gives the same result for a given text
 as models, tokenizers, and embedding pipelines change.
 \item \textbf{Continuity across models.} The same detector should remain
 effective when participating models continue or rewrite the text.
 \item \textbf{Low generation overhead.} Because deployed language models
 serve a large volume of requests, embedding should add limited computation
 and latency during generation.
\end{itemize}

To meet these requirements, we propose \ByteMark, which defines its signal on
the UTF-8 byte stream of the published text. During generation, each
participating model biases its token selection with a shared byte-level
scoring rule; at detection, a verifier applies the
same rule to the text's bytes using the shared key and public specification.
Because detection requires neither a tokenizer nor a semantic pipeline, one
detector serves every participating model, and \ByteMark{} functions as a
{\fontencoding{T1}\selectfont\textit{unified watermarking scheme}} for
heterogeneous generators.

\paragraph{Contributions.}
\begin{enumerate}[leftmargin=2em,nosep]
 \item We distinguish presumed-source verification from unknown-source detection and show empirically that replacing the tokenizer or embedding model removes most of the signal of existing token-level and sentence-level detectors, limiting their use for unknown-source detection.
 \item We propose \ByteMark, which defines the watermark on the UTF-8 byte stream
 so that heterogeneous generators share one signal definition and one
 detection rule. By construction, the detection score is determined solely by the text, key, and public specification, without reference to the generating tokenizer. Under an ideal pseudorandom function, the detection
 statistic has an exact finite-sample null distribution, which yields
 thresholds at any target false-positive rate without calibration data.
 \item We demonstrate model independence and continuity across models by applying a
 single \ByteMark{} detector to outputs of heterogeneous generators and to text
 continued or rewritten by other participating models, without selecting a
 representation at model boundaries. When a participating model rewrites the
 text and re-embeds the watermark, the rewritten output remains detectable.
 We also quantify generation overhead through the additional candidate-token work and
 latency of sentence-level methods alongside \ByteMark's own byte-scoring
 overhead.
\end{enumerate}


\section{Related Work}
\label{sec:related}
\label{sec:detection-requirements}
\label{sec:problem}

We review token-level and sentence-level watermarking methods in terms of their
generation procedures and detection requirements. A watermark detector can
operate without querying the generating model yet still depend on its
representation pipeline. We examine how embedding coordinates, implementation
configurations, and token IDs enter the detection statistic and how changing
the pipeline affects detection.

\paragraph{Sentence-level watermarking.}
Sentence-level watermarking methods typically rely on a semantic embedding
pipeline to select candidate sentences and detect the embedded signal.
Their detection statistics are defined in the geometry of that embedding space:
hyperplane regions \citep{hou2024semstamp}, $k$-means or fuzzy clusters
\citep{hou2024ksemstamp,diaa2026kswordstamp,zhang2025cohemark}, similarity
intervals between adjacent sentences \citep{li2025simmark}, and projections onto
keyed pivot vectors or directions \citep{wang2026pmark,huo2026samark,li2026alimark}.
Two embedding models can capture similar relational structure without preserving
these hyperplane regions, clusters, similarity intervals, or pivot
projections. Our embedding-model
replacement experiments show score degradation
(Table~\ref{tab:e5}), and the detectors are undefined for encoders with a different output dimension (Appendix~\ref{sec:encoder-replacement}).
When such a watermark is deployed, this dependence creates a long-term
compatibility trade-off:

\begin{itemize}[leftmargin=2em,nosep]
 \item \textbf{Retain the original pipeline.} A fixed embedding pipeline
 preserves compatibility with previously embedded watermarks
 but cannot benefit from newer embedding models with improved text representations
 \citep{wang2024improvingembeddings,lee2025nvembed,zhang2025qwen3embedding}.
 Retaining the original pipeline also requires maintaining an aging
 deployment environment, whose software dependencies may eventually become deprecated
 or unsupported.
 \item \textbf{Update the pipeline.} A new pipeline may fail to detect
 watermarks embedded with the original one (Section~\ref{sec:exp-replacement}), so updating risks losing backward compatibility.
\end{itemize}

Sentence-level resampling and candidate selection can generate
multiple continuations for each retained sentence, increasing the number of
tokens generated per final output token
\citep{hou2024semstamp,hou2024ksemstamp,li2025simmark}. This extra generation
can increase latency and reduce final-output throughput, depending on how
candidate generation is batched or parallelized. Such overhead matters at the
scale of current use: agentic coding tasks consume on the order of a thousand
times more tokens than code chat or reasoning tasks~\citep{bai2026agents}, and
token consumption is a major computational and economic cost
for deployed systems~\citep{chen2026token}. Section~\ref{sec:exp-efficiency}
evaluates throughput and candidate-token work separately.

\paragraph{Token-level watermarking.}
Token-level watermarking methods embed the signal at each decoding step by
modifying the token probabilities or the sampling rule. KGW and its
descendants bias a keyed green region in the model vocabulary
\citep{kirchenbauer2023watermark,zhao2024unigram,kirchenbauer2024reliability}.
Distortion-free reweighting and deterministic sampling alter the sampling rule
but retain randomness indexed by the token context or position
\citep{kuditipudi2023robust,wu2024dipmark,hu2024unbiased,feng2025bimark};
SynthID adds a multi-layer tournament signal \citep{dathathri2024synthid}.
Semantic-conditioned variants map an embedding model's output to token-level
biases through learned or projected transforms, and therefore retain a
vocabulary-specific mapping
\citep{liu2024semantic,he2024xsir,ren2024semamark,schafer2026dew}.

The same text can receive different token IDs and segmentations under
different tokenizers (Table~\ref{tab:tokenizer-example}). For token-based
watermarks, these differences change the inputs to the detection rule and can
therefore change the detector's keyed assignments and score.

\begin{table}[H]
\centering
\small
\setlength{\tabcolsep}{4pt}
\caption{Tokenization of three words under six fixed tokenizers. Each input has a leading space and is encoded without special tokens. Each cell shows token pieces (top) and IDs (bottom); centered dots separate tokens.}
\label{tab:tokenizer-example}
\maxsizebox{\textwidth}{!}{%
\begin{tabular}{lcccccc}
\toprule
Word & Pythia & Qwen2.5 & OpenLLaMA & GPT-Neo & OPT & TinyLlama \\
\midrule
\texttt{model} & \begin{tabular}[c]{@{}c@{}}\texttt{model}\\{\scriptsize 1566}\end{tabular} & \begin{tabular}[c]{@{}c@{}}\texttt{model}\\{\scriptsize 1614}\end{tabular} & \begin{tabular}[c]{@{}c@{}}\texttt{model}\\{\scriptsize 2042}\end{tabular} & \begin{tabular}[c]{@{}c@{}}\texttt{model}\\{\scriptsize 2746}\end{tabular} & \begin{tabular}[c]{@{}c@{}}\texttt{model}\\{\scriptsize 1421}\end{tabular} & \begin{tabular}[c]{@{}c@{}}\texttt{model}\\{\scriptsize 1904}\end{tabular} \\
\texttt{watermark} & \begin{tabular}[c]{@{}c@{}}\texttt{waterm$\cdot$ark}\\{\scriptsize 37385$\cdot$782}\end{tabular} & \begin{tabular}[c]{@{}c@{}}\texttt{watermark}\\{\scriptsize 88006}\end{tabular} & \begin{tabular}[c]{@{}c@{}}\texttt{water$\cdot$mark}\\{\scriptsize 1536$\cdot$3630}\end{tabular} & \begin{tabular}[c]{@{}c@{}}\texttt{water$\cdot$mark}\\{\scriptsize 1660$\cdot$4102}\end{tabular} & \begin{tabular}[c]{@{}c@{}}\texttt{water$\cdot$mark}\\{\scriptsize 514$\cdot$6920}\end{tabular} & \begin{tabular}[c]{@{}c@{}}\texttt{wat$\cdot$erm$\cdot$ark}\\{\scriptsize 16699$\cdot$837$\cdot$935}\end{tabular} \\
\texttt{paraphrase} & \begin{tabular}[c]{@{}c@{}}\texttt{par$\cdot$aphr$\cdot$ase}\\{\scriptsize 1061$\cdot$24596$\cdot$511}\end{tabular} & \begin{tabular}[c]{@{}c@{}}\texttt{paraph$\cdot$rase}\\{\scriptsize 62230$\cdot$10632}\end{tabular} & \begin{tabular}[c]{@{}c@{}}\texttt{par$\cdot$aphr$\cdot$ase}\\{\scriptsize 1302$\cdot$29486$\cdot$542}\end{tabular} & \begin{tabular}[c]{@{}c@{}}\texttt{paraph$\cdot$rase}\\{\scriptsize 38624$\cdot$22789}\end{tabular} & \begin{tabular}[c]{@{}c@{}}\texttt{paraph$\cdot$rase}\\{\scriptsize 40127$\cdot$34338}\end{tabular} & \begin{tabular}[c]{@{}c@{}}\texttt{par$\cdot$ap$\cdot$hr$\cdot$ase}\\{\scriptsize 610$\cdot$481$\cdot$1092$\cdot$559}\end{tabular} \\
\bottomrule
\end{tabular}}
\end{table}

Section~\ref{sec:exp-replacement} measures this effect systematically (Table~\ref{tab:compact-mismatch-token}), and Appendix~\ref{sec:related-appendix} discusses watermarking through formatting,
synonym substitution, and prompting.


\section{Method}
\label{sec:method}

Section~\ref{sec:detection-requirements} traced the difficulty to the carrier:
embedding coordinates, token IDs, and their implementation configurations
exist only inside components the verifier may not hold. We therefore look for
a watermark carrier inside the text itself. We introduce that construction
through a toy example, LetterMark, which uses the letters of English text. It serves only as an illustration, yet it already detects from the text alone; Section~\ref{sec:bytemark} applies the
same pattern to a carrier that covers arbitrary text.

\subsection{Toy example: LetterMark}
\label{sec:toy-example}

LetterMark works as follows (Figure~\ref{fig:lettermark-example}). At each
generation step, a keyed hash of the last $h$ letters of the completion
selects a green list of letters. Since a token is generated as a whole but may
contain several letters, each candidate token is scored letter by letter with
a rolling window, and its score $R$ is
its fraction of green letters. The fraction $\gamma$ of candidates with the
highest $R$ receives a logit increase of $\delta>0$, and the model samples as
usual with its own tokenizer.

The verifier holds only the published text and the key. It slides the same
window over the letters, recomputes each green list, and counts green letters;
an unusually high rate indicates the watermark.

\begin{figure}[!htb]
  \centering
  \resizebox{.8\linewidth}{!}{%
  \begin{tikzpicture}[
      font=\small,
      >={Stealth[length=4pt]},
      node distance=3mm and 4mm,
      cell/.style={draw, minimum width=5.6mm, minimum height=6mm,
                   inner sep=0pt, font=\ttfamily\small},
      inw/.style={cell, fill=blue!7, draw=blue!55!black},
      outw/.style={cell, fill=gray!8, draw=gray!60, text=gray!70},
      box/.style={draw, rounded corners=2pt, minimum height=7mm,
                  inner sep=4pt, align=center},
      hash/.style={box, fill=gray!12},
      glist/.style={box, fill=gray!5},
      hit/.style={box, fill=green!25, draw=green!45!black, font=\ttfamily\small},
      miss/.style={box, fill=red!12, draw=red!55!black, font=\ttfamily\small},
      win/.style={draw=blue!55!black, thick, rounded corners=2pt, inner sep=1.5pt},
      lab/.style={font=\small\bfseries, anchor=east},
      note/.style={font=\footnotesize, align=left},
    ]
    \def\dy{1.35}
    \def\cw{0.6}
    \newcommand{\step}[9]{%
      \node[lab] at (-0.4,#1) {step #2};
      \foreach \i/\c/\s in {#3}
        \node[\s] (s#2\i) at (\cw*\i,#1) {\c};
      \node[win, fit=(s#2#4)(s#2#5)] (w#2) {};
      \node[hash] (h#2) at (5.3,#1) {$\PRF_k(\cdot)$};
      \node[glist, right=of h#2] (g#2) {$\{#6\}$};
      \node[#8, right=of g#2] (c#2) {#7};
      \node[note, right=2mm of c#2] {#9};
      \draw[->] (w#2.east) -- (h#2);
      \draw[->] (h#2) -- (g#2);
      \draw[->] (g#2) -- (c#2);
    }
    \node[note, anchor=center, align=center] at (4.7,\dy-0.2)
      {\textbf{Example:} scoring the candidate token \texttt{tea}\\
       window of last $h{=}4$ letters $\;\rightarrow\;$ green list
       $\;\rightarrow\;$ is the next letter green?};
    \step{0}{1}{0/c$_1$/inw,1/c$_2$/inw,2/c$_3$/inw,3/c$_4$/inw}{0}{3}
         {e,a,t}{t}{hit}{\texttt{t} green}
    \step{-\dy}{2}{0/c$_1$/outw,1/c$_2$/inw,2/c$_3$/inw,3/c$_4$/inw,4/t/inw}{1}{4}
         {r,o,n}{e}{miss}{\texttt{e} not green}
    \step{-2*\dy}{3}{0/c$_1$/outw,1/c$_2$/outw,2/c$_3$/inw,3/c$_4$/inw,4/t/inw,5/e/inw}{2}{5}
         {a,s,i}{a}{hit}{\texttt{a} green}
    \draw[->, blue!55!black, thick] (w1.south) ++(0.3,0) -- ++(0,-0.45)
      node[midway, right, note, text=blue!55!black] {append \texttt{t}, drop c$_1$};
    \draw[->, blue!55!black, thick] (w2.south) ++(0.3,0) -- ++(0,-0.45)
      node[midway, right, note, text=blue!55!black] {append \texttt{e}, drop c$_2$};
    \node[box, fill=green!9, draw=green!45!black, anchor=west]
      at (0,-3*\dy-0.15)
      {$R(\texttt{tea})=\dfrac{\text{green letters}}{\text{letters}}
        =\dfrac{2}{3}$};
    \node[note, anchor=west] at (5.0,-3*\dy-0.15)
      {compared with the other candidates' $R$;\\
       top-$\gamma$ receive $+\delta$ on their logit};
    \draw[gray!50, dashed] (-1.2,-3*\dy-0.8) -- (10.6,-3*\dy-0.8);
    \def\yd{-3*\dy-1.8}
    \node[lab] at (-0.4,\yd) {detect};
    \foreach \i/\c/\s in {0/c$_1$/outw,1/c$_2$/outw,2/c$_3$/inw,3/c$_4$/inw,
                          4/t/inw,5/e/inw,6/a/hit,7/c$_5$/outw,8/c$_6$/outw}
      \node[\s, minimum height=6mm, minimum width=5.6mm, inner sep=0pt,
            rounded corners=0pt, font=\ttfamily\small] (d\i) at (\cw*\i,\yd) {\c};
    \node[win, fit=(d2)(d5)] (wd) {};
    \node[hash] (hd) at (6.0,\yd) {$\PRF_k(\cdot)$};
    \node[glist, right=of hd] (gd) {$\{a,s,i\}$};
    \node[note, right=2mm of gd] {\texttt{a} green, $+1$};
    \draw[->] (wd.north) -- ++(0,0.35) -| (hd.north);
    \draw[->] (hd) -- (gd);
  \end{tikzpicture}}
  \caption{LetterMark scoring the candidate token \texttt{tea} after
    context letters $c_1c_2c_3c_4$ (top) and detecting the mark (bottom).
    The blue window holds the last $h=4$ letters, and its keyed hash selects
    the green list for the next letter; after each check the letter is
    appended and the window slides by one. Two of the three letters are
    green, so $R(\texttt{tea})=2/3$. The verifier repeats the same scan on
    the published text without a tokenizer.}
  \label{fig:lettermark-example}
\end{figure}

Letters can serve as the carrier because they have two properties:
\begin{itemize}[leftmargin=2em,nosep]
 \item \textbf{Publicly accessible.} The letters are part of the published
 text and can be recovered by any verifier without access to private
 artifacts.
 \item \textbf{Representation-independent.} The letter sequence is determined
 by the published text itself, rather than by a model-specific
 representation. The same text therefore provides the same carrier
 across models and over time.
\end{itemize}

\subsection{Embedding a watermark through the byte stream}
\label{sec:bytemark}

LetterMark illustrates a publicly accessible, representation-independent
carrier, but letters exist only in alphabetic scripts. General text needs a
carrier with the same two properties that also covers numerals, punctuation,
non-Latin scripts, and the other characters a text may contain. We propose
\ByteMark, which takes the UTF-8 byte stream of the text as its watermark
carrier. UTF-8 maps Unicode code points to byte sequences that a verifier can
recover from the published text without information from the generator.

For a secret key $k$, byte context $c$, and next byte $b$, define
\begin{equation}
 u_b(c)=\frac{(\operatorname{int}_{64}\circ\PRF_k)(c,b)}{2^{64}}\in[0,1),
 \label{eq:compact-prf}
\end{equation}
where $\operatorname{int}_{64}$ reads the first eight bytes of a PRF output as a big-endian unsigned integer, and $\PRF_k$ is a keyed pseudorandom function (PRF) under the secret key
$k$: a deterministic function whose outputs are indistinguishable from uniform random values
without the key, built from cryptographic primitives widely used in secure
communications \citep{rfc8446,rfc4231}.
The public specifications of these primitives support reproducible computation across platforms.

The context $c$ contains at most the last $h$ bytes of the generated completion.
For a candidate token $v$ with $L_v$ bytes, we score each byte $b_j$ using
its rolling context $c_j=\operatorname{last}_h(c\Vert b_1\Vert\cdots\Vert b_{j-1})$,
which includes earlier bytes of the
candidate token. Here $\operatorname{last}_h$ retains at most the final $h$
bytes, and $\Vert$ denotes concatenation. We then average the scores to bias the token logits:
\begin{equation}
 R(v;c)=\frac{1}{L_v}\sum_{j=1}^{L_v}u_{b_j}(c_j),\qquad
 \ell'_v=\ell_v+\delta\,\mathbf 1[v\in\operatorname{Top}_\gamma R].
 \label{eq:compact-generation}
\end{equation}
Here $\ell_v$ is the original logit, $\operatorname{Top}_\gamma R$ contains
the highest-scoring fraction $\gamma$ of candidates, and $\delta$ is the
logit increase. This bias encourages tokens with larger $R$, favoring higher
byte scores $u$ across the watermarked text. The model samples from
$\operatorname{softmax}(\ell')$.

Each generator applies the shared byte-level rule through its own token
vocabulary, while detection depends only on the text's bytes, key, and public
specification; Figure~\ref{fig:compact-interface} contrasts these verification inputs with those of the evaluated baselines. The byte-level carrier also supports other injection
rules: Appendix~\ref{sec:multilayer} outlines a possible extension to
SynthID-style multilayer reweighting. We use the single-layer green-set logit
bias for computational economy and implementation simplicity.

Because this specification is fixed and its computation deterministic,
the detection procedure does not change when models or their tokenizers are updated.

Algorithms~\ref{alg:gen} and~\ref{alg:detect} in Appendix~\ref{sec:algorithm}
give the generation and detection procedures, together with details on UTF-8, the PRF construction, and implementation
rules.

\begin{figure}[t]
  \centering
  \newcommand{\ifcmark}{\textcolor{green!45!black}{\ding{51}}}%
  \newcommand{\ifxmark}{\textcolor{red!70!black}{\ding{55}}}%
  \resizebox{.8\linewidth}{!}{%
  \begin{tikzpicture}[
      font=\footnotesize,
      box/.style={draw, rounded corners=1.5pt, align=center, inner sep=1.5pt,
                  minimum width=14mm, minimum height=5mm, font=\scriptsize},
      private/.style={box, fill=red!11, draw=red!65!black},
      public/.style={box, fill=blue!8, draw=blue!55!black},
      cell/.style={draw=gray!50, minimum width=14mm, minimum height=5.6mm},
      lab/.style={font=\footnotesize\bfseries, align=center},
    ]
    \node[align=center] at (2.9,1.35) {\textbf{Evaluated baselines}\\{\scriptsize detector reads with the tokenizer or embedding of}};
    \foreach \d/\x in {A/1.4,B/2.9,C/4.4}
      \node[private] at (\x,0.6) {generator \d};
    \node[align=center] at (6.6,1.35) {\textbf{\ByteMark}\\{\scriptsize\strut}};
    \node[public] at (6.6,0.6) {one detector};
    \foreach \s/\y in {A/0,B/-0.56,C/-1.12} {
      \node[anchor=east] at (0.62,\y) {generator \s};
      \foreach \d/\x in {A/1.4,B/2.9,C/4.4} {
        \ifx\s\d \node[cell, fill=green!10] at (\x,\y) {\ifcmark};
        \else \node[cell, fill=red!6] at (\x,\y) {\ifxmark}; \fi }
      \node[cell, fill=green!10] at (6.6,\y) {\ifcmark};
    }
    \draw[gray, dashed] (5.5,1.0) -- (5.5,-1.45);
  \end{tikzpicture}}
  \caption{Each baseline detector recovers the watermark only with the
    representation of the generating source, whereas one \ByteMark detector
    serves all three sources.}
  \label{fig:compact-interface}
\end{figure}

\subsection{Text-only detection}
\label{sec:text-only-detection}

To establish a theoretical basis for detection, we first examine the
distribution of byte scores under the null hypothesis: the text is
unwatermarked and independent of the watermark key. In the ideal
continuous-score model, distinct context--byte pairs receive independent
$U[0,1)$ scores; for a text fixed independently of these scores, all probabilities and expectations below are over this random assignment. Detection tests whether the mean of the text's distinct byte scores exceeds its
null expectation.

For a text $x$ with byte stream $b_1,\ldots,b_N$, the verifier reconstructs
$c_i=\operatorname{last}_h(b_1\Vert\cdots\Vert b_{i-1})$ and
$u_i=u_{b_i}(c_i)$. We retain each distinct $(c_i,b_i)$ pair once, following
the treatment of repeated $n$-grams in prior work
\citep{kirchenbauer2024reliability,fernandez2023three}. Let $m\geq1$ be the
number retained. Reindex their scores as $u_1,\ldots,u_m$ and let $\bar
u=m^{-1}\sum_{i=1}^m u_i$. The index $j$ above is local to a candidate token;
before deduplication, $i$ indexes the complete byte stream.

\paragraph{Null calibration.}
\begin{theorem}[Fixed-text null calibration]
\label{thm:compact-calibration}
Under the null hypothesis, the $m$ retained scores
are independent $U[0,1)$ values, so $\mathbb E[\bar u]=1/2$,
$\operatorname{Var}(\bar u)=1/(12m)$, and $m\bar u$ has the $m$-term
Irwin--Hall distribution with CDF
\begin{equation}
 F_m(t)=\frac{1}{m!}\sum_{j=0}^{\lfloor t\rfloor}(-1)^j\binom{m}{j}(t-j)^m,
 \qquad 0\leq t\leq m.
 \label{eq:compact-irwin-hall}
\end{equation}
Consequently, the standardized mean
\begin{equation}
 z(x)=\sqrt{12m}\left(\bar u-\frac{1}{2}\right)
 \label{eq:compact-detector}
\end{equation}
satisfies $\mathbb E[z]=0$ and $\operatorname{Var}(z)=1$, and for every $m\geq1$
and $0<\alpha<1$,
\begin{equation}
 \Pr[z>\tau(m,\alpha)]=\alpha,\qquad
 \tau(m,\alpha)=\frac{F_m^{-1}(1-\alpha)-m/2}{\sqrt{m/12}}.
 \label{eq:compact-exact-threshold}
\end{equation}
As $m\to\infty$,
\begin{equation}
 z(x)\xrightarrow{d}\mathcal N(0,1),\qquad
 \tau(m,\alpha)\to\Phi^{-1}(1-\alpha),
 \label{eq:compact-asymptotic}
\end{equation}
for each fixed $0<\alpha<1$, where $\Phi$ is the standard normal CDF.
\end{theorem}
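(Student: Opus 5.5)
The argument has three parts. First, independence follows from the ideal null model together with deduplication. Second, exact calibration comes from continuity and strict monotonicity of the Irwin--Hall CDF. Third, the asymptotic claims follow from the central limit theorem plus a quantile-convergence argument.

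\textbf{Independence and moments.} Under the null, the text $x$, and hence its byte stream $b_1,\ldots,b_N$, is fixed independently of the score assignment. The reconstructed contexts $c_i=\operatorname{last}_h(b_1\Vert\cdots\Vert b_{i-1})$ are therefore deterministic functions of $x$. After deduplication, the $m$ retained pairs $(c_i,b_i)$ are distinct, and they were selected without looking at any score. The ideal continuous-score model assigns independent $U[0,1)$ scores to distinct context--byte pairs. So the retained $u_1,\ldots,u_m$ are i.i.d.\ $U[0,1)$.

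This is the step where care is needed. The selection of which pairs appear, and how many ($m$), must not depend on the scores. It does not, because deduplication acts on the pairs themselves and never reads a $u$ value. I would state this explicitly; it is why deduplication is needed at all, since a repeated pair would contribute perfectly correlated scores. The moments then follow immediately: $\mathbb E[u_i]=1/2$ and $\operatorname{Var}(u_i)=1/12$ give $\mathbb E[\bar u]=1/2$ and $\operatorname{Var}(\bar u)=1/(12m)$. Since $z$ is an affine map of $\bar u$, we get $\mathbb E[z]=0$ and $\operatorname{Var}(z)=1$.

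\textbf{Irwin--Hall law and exact threshold.} The sum $S=m\bar u$ of $m$ i.i.d.\ uniforms has the Irwin--Hall distribution. I would cite the standard formula, or derive it by induction on $m$ via convolution. Alternatively, use inclusion--exclusion on the volume $\{s\in[0,1]^m:\sum s_i\le t\}$, taking orthants shifted by unit vectors; this yields $\frac1{m!}\sum_j(-1)^j\binom mj(t-j)_+^m$, which matches the stated $F_m$.

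For the threshold, note that $F_m$ is continuous and strictly increasing on $[0,m]$, because the density is positive on $(0,m)$. Hence for $0<\alpha<1$ the quantile $F_m^{-1}(1-\alpha)$ is well defined and $\Pr[S>F_m^{-1}(1-\alpha)]=\alpha$ exactly. The event $z>\tau(m,\alpha)$ is the same as $S>F_m^{-1}(1-\alpha)$, because $z=(S-m/2)/\sqrt{m/12}$ is strictly increasing in $S$. This gives the exact identity in the statement.

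\textbf{Asymptotics.} The classical Lindeberg--L\'evy CLT for i.i.d.\ variables with finite variance gives $z\xrightarrow{d}\mathcal N(0,1)$.

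For the threshold, let $G_m$ be the CDF of $z$. Then $\tau(m,\alpha)=G_m^{-1}(1-\alpha)$, and $G_m\to\Phi$ pointwise. Since $\Phi$ is continuous and strictly increasing, standard quantile convergence gives $G_m^{-1}(p)\to\Phi^{-1}(p)$ for every $p\in(0,1)$.

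A direct version of that argument runs as follows. Fix $\epsilon>0$. Then $G_m(\Phi^{-1}(1-\alpha)+\epsilon)\to\Phi(\Phi^{-1}(1-\alpha)+\epsilon)>1-\alpha$. So for large $m$ we have $\tau\le\Phi^{-1}(1-\alpha)+\epsilon$, and the symmetric argument gives the lower bound. The convergence is even uniform by P\'olya's theorem, but pointwise convergence suffices here.
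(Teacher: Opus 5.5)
Your proposal is correct and follows essentially the same route as the paper. That route is: independence of the deduplicated scores under the fixed-text ideal null, the Irwin--Hall CDF via inclusion--exclusion on the volume of the truncated simplex, the exact threshold from continuity and strict monotonicity of $F_m$, and the CLT plus quantile convergence for $\tau(m,\alpha)\to\Phi^{-1}(1-\alpha)$. Your remark that deduplication never reads a score, and your direct $\epsilon$-argument for the quantile limit, simply spell out steps the paper states more briefly.
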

\begin{proof}
See Appendix~\ref{sec:null-proof}.
\end{proof}

The detector therefore uses $z(x)$ as its statistic and reports watermark
evidence when $z(x)>\tau$ for a threshold $\tau$; a larger $z$ indicates a
higher average byte score than the null mean $1/2$. In our experiments, we use
empirically calibrated detection thresholds (Appendix~\ref{sec:thresholds}) so
that reported operating points reflect the evaluated human pools. A deployment
without calibration data can use the theoretical threshold;
Appendix~\ref{sec:null-realized} compares its nominal false-positive rates
with the rates realized on human text.

We next show that the injection raises the expected token score: at each decoding step, the boost cannot lower it and raises it strictly whenever the boosted
set has the higher conditional mean.
\begin{proposition}[One-step score increase]
\label{prop:one-step-score}
Fix the decoding history and candidate scores $R(v;c)$. Let
$p=\operatorname{softmax}(\ell)$ and $q=\operatorname{softmax}(\ell')$,
where $\ell'$ applies a boost $\delta\geq0$ to the highest-scoring
candidate set $G=\operatorname{Top}_\gamma R$.
Write $p_G=\sum_{v\in G}p(v)$ and $q_G=\sum_{v\in G}q(v)$.
For $0<p_G<1$, define the conditional means
$\mu_G=\mathbb E_p[R\mid v\in G]$ and
$\mu_{\bar G}=\mathbb E_p[R\mid v\notin G]$. Then
\begin{equation}
 \mathbb E_q[R]-\mathbb E_p[R]
 =(q_G-p_G)(\mu_G-\mu_{\bar G})\geq0.
 \label{eq:one-step-score-increase}
\end{equation}
The increase is strict if $\delta>0$ and $\mu_G>\mu_{\bar G}$.
\end{proposition}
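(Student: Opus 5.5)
The plan is to write $q$ explicitly as a two-block reweighting of $p$ and to decompose both expectations over $G$ and its complement. Since $\ell'_v=\ell_v+\delta\mathbf 1[v\in G]$, we have $q(v)=p(v)e^{\delta\mathbf 1[v\in G]}/Z$ with normalizer $Z=p_Ge^{\delta}+(1-p_G)$. The key observation is that inside each block the boost multiplies every probability by the same constant. The conditional distributions of $q$ given $v\in G$ and given $v\notin G$ therefore coincide with those of $p$, and so
\begin{equation*}
\mathbb E_q[R\mid v\in G]=\mu_G,\qquad \mathbb E_q[R\mid v\notin G]=\mu_{\bar G}.
\end{equation*}
This step needs $0<q_G<1$, which follows from $0<p_G<1$ because $q_G=p_Ge^{\delta}/Z$.

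By the law of total expectation, $\mathbb E_p[R]=p_G\mu_G+(1-p_G)\mu_{\bar G}$ and $\mathbb E_q[R]=q_G\mu_G+(1-q_G)\mu_{\bar G}$. Subtracting gives $(q_G-p_G)(\mu_G-\mu_{\bar G})$ immediately, which is the identity in \eqref{eq:one-step-score-increase}.

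For the sign, I will use two separate facts.
\begin{itemize}[leftmargin=2em,nosep]
 \item From the formula $q_G=p_Ge^{\delta}/(p_Ge^{\delta}+1-p_G)$, we get $q_G-p_G=p_G(1-p_G)(e^{\delta}-1)/Z\geq0$. Equality holds iff $\delta=0$ when $0<p_G<1$.
 \item $\mu_G\geq\mu_{\bar G}$ holds because $G=\operatorname{Top}_\gamma R$. Every $v\in G$ has $R(v;c)\geq R(w;c)$ for every $w\notin G$, so $\mu_G\geq\min_{v\in G}R\geq\max_{w\notin G}R\geq\mu_{\bar G}$. This holds whatever tie-breaking rule defines the top set.
\end{itemize}
The product of the two factors is therefore nonnegative. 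If $\delta>0$ and $\mu_G>\mu_{\bar G}$, both factors are strictly positive, which gives the strict increase.

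The only mildly delicate point is the ordering $\mu_G\geq\mu_{\bar G}$. It relies on $G$ being a genuine top set in the $R$-ordering, including the handling of ties, and on the conditional means being well defined, which the assumption $0<p_G<1$ ensures. The rest is routine algebra.
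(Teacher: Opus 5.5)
Your proposal is correct and follows essentially the same route as the paper's proof: the boost rescales each block by a constant, so the conditional means are unchanged. The two-block decomposition then yields $(q_G-p_G)(\mu_G-\mu_{\bar G})$, and the sign follows from $q_G-p_G=(e^\delta-1)p_G(1-p_G)/Z$ together with the top-set ordering. The only cosmetic difference is that the paper proves $\mu_G\geq\mu_{\bar G}$ by writing the difference as a $p$-weighted double sum of $R(v)-R(w)$ over $v\in G$ and $w\notin G$, whereas you use the min/max sandwich; both arguments are valid.
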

\begin{proof}
See Appendix~\ref{sec:one-step-proof}.
\end{proof}

A
closed-form characterization of the document-level statistic under
watermarking requires additional assumptions on the byte distribution and
dependencies across decoding steps; we leave this analysis outside the scope
of this work.


\section{Experiments}
\label{sec:experiments}

We empirically evaluate representation-independent detection, continuity
across participating models, and generation efficiency.
\begin{itemize}[leftmargin=2em,nosep]
 \item \textbf{Detection without tokenizer or embedding dependencies
 (Section~\ref{sec:exp-replacement}).}
 Token-level and sentence-level detectors lose most of their signal when
 the detector's tokenizer or embedding model differs from the generator's,
 whereas \ByteMark requires neither.
 \item \textbf{Detection continuity across participating models
 (Section~\ref{sec:exp-composition}).}
 We evaluate one unchanged detector on text continued by participating
 models or rewritten by them.
 \item \textbf{Generation efficiency (Section~\ref{sec:exp-efficiency}).}
 We measure output throughput and the number of candidate tokens
 generated per final output token.
\end{itemize}

Appendix~\ref{sec:protocol} states the experimental protocol; the other
appendices report additional experiments, analyses, and diagnostics.

\subsection{Detection under embedding model and tokenizer replacement}
\label{sec:exp-replacement}

We evaluate the representation dependence discussed in
Section~\ref{sec:detection-requirements} by replacing the detector's
representation while holding the generated text fixed. \ByteMark uses neither
component, so its rows simply report detection on the three datasets.

\begin{table}[H]
\centering\small
\caption{TPR (\%) of the token-level detectors at the same-pool empirical 1\% threshold. Matched uses the generating tokenizer; each Replacement cell averages the five other tokenizers over the three generators of that dataset. \ByteMark uses no tokenizer, so its cells are its detection rates.}
\label{tab:tokenizer-domain-summary}
\label{tab:compact-mismatch-token}
\maxsizebox{\linewidth}{!}{%
\begin{tabular}{lcccc}
\toprule
Method & \multicolumn{3}{c}{Replacement tokenizer} & Matched \\
\cmidrule(lr){2-4}
 & C4 & Wikipedia & LFQA & \\
\midrule
\ifcsvstrcmp{ByteMark}{ByteMark}{\textbf{\ByteMark}}{ByteMark} & \ifcsvstrcmp{1}{1}{\textbf{99.5}}{99.5} & \ifcsvstrcmp{1}{1}{\textbf{99.3}}{99.3} & \ifcsvstrcmp{1}{1}{\textbf{99.6}}{99.6} & 99.5\\
\ifcsvstrcmp{KGW}{ByteMark}{\textbf{\ByteMark}}{KGW} & \ifcsvstrcmp{0}{1}{\textbf{1.5}}{1.5} & \ifcsvstrcmp{0}{1}{\textbf{1.3}}{1.3} & \ifcsvstrcmp{0}{1}{\textbf{1.7}}{1.7} & 99.6\\
\ifcsvstrcmp{SynthID}{ByteMark}{\textbf{\ByteMark}}{SynthID} & \ifcsvstrcmp{0}{1}{\textbf{0.6}}{0.6} & \ifcsvstrcmp{0}{1}{\textbf{0.6}}{0.6} & \ifcsvstrcmp{0}{1}{\textbf{0.6}}{0.6} & 100.0\\
\ifcsvstrcmp{SIR}{ByteMark}{\textbf{\ByteMark}}{SIR} & \ifcsvstrcmp{0}{1}{\textbf{1.7}}{1.7} & \ifcsvstrcmp{0}{1}{\textbf{5.9}}{5.9} & \ifcsvstrcmp{0}{1}{\textbf{2.2}}{2.2} & 96.5\\
\ifcsvstrcmp{EXP}{ByteMark}{\textbf{\ByteMark}}{EXP} & \ifcsvstrcmp{0}{1}{\textbf{1.6}}{1.6} & \ifcsvstrcmp{0}{1}{\textbf{1.2}}{1.2} & \ifcsvstrcmp{0}{1}{\textbf{1.3}}{1.3} & 99.9\\
\ifcsvstrcmp{ITS}{ByteMark}{\textbf{\ByteMark}}{ITS} & \ifcsvstrcmp{0}{1}{\textbf{0.8}}{0.8} & \ifcsvstrcmp{0}{1}{\textbf{0.8}}{0.8} & \ifcsvstrcmp{0}{1}{\textbf{1.0}}{1.0} & 98.0\\
\bottomrule
\end{tabular}
}
\end{table}

\begin{table}[H]
\centering\small
\caption{TPR (\%) of the sentence-level methods on C4 at the same-pool
empirical 1\% threshold. Matched uses the generating embedding model;
Replacement averages the replacement models that can score the method
(Appendix~\ref{sec:encoder-replacement}).}
\label{tab:e5}
\maxsizebox{\linewidth}{!}{%
\begin{tabular}{lrr}
\toprule
Method & Replacement & Matched \\
 & embedding & \\
\midrule
SimMark & 9.2 & 91.8\\
SemStamp & 3.8 & 99.1\\
k-SemStamp & 3.7 & 96.6\\
AliMark & 6.5 & 99.8\\
\bottomrule
\end{tabular}
}
\end{table}

\paragraph{Results.}
Table~\ref{tab:tokenizer-domain-summary} shows that every token-level
baseline detects its watermark with its own tokenizer, but detection largely fails with replacement tokenizers, whereas \ByteMark
retains its detection rate in every domain. The complete generator--tokenizer
matrices are in Tables~\ref{tab:e2}--\ref{tab:e02-its} of
Appendix~\ref{sec:core}.

Table~\ref{tab:e5} reports the sentence-level methods under embedding-model
replacement. With their own embedding model all four reach a TPR above 90\%;
with a replacement model the mean TPR falls to 4--9\%.
Appendix~\ref{sec:encoder-replacement} reports each replacement model
separately.

For token-level detectors, searching over candidate tokenizers does not remove the dependence on the generating tokenizer. For Table~\ref{tab:search-main}, a search flags a human text whenever any of $K$ candidate tokenizers exceeds its own same-pool threshold at level $\alpha$. Its false-positive rate is several times the single-test rate and close to $1-(1-\alpha)^K$, the value for independent tests, so the candidates' false-positive sets overlap little at this threshold and the combined decision must be calibrated. Bonferroni correction achieves this without assuming independence by testing each candidate at level $\alpha/K$ (Appendix~\ref{sec:search}). These stricter thresholds can reduce power relative to detection with the known tokenizer, and the search still requires the candidate set to contain the generating tokenizer.

\begin{table}[t]
\centering\small
\caption{Same-pool empirical false-positive rate of an uncorrected search that
flags a human text when any of $K=6$ candidate tokenizers exceeds its own
threshold at level $\alpha=1\%$. Independent candidates would give
$1-(1-\alpha)^K\approx5.85\%$; identical rejection events would give $\alpha$.
Bonferroni correction tests each candidate at $\alpha/K\approx0.167\%$.}
\label{tab:search-main}
\begin{tabular}{lcc}
\toprule
Method & Observed & Independent \\
\midrule
KGW & 0.0580 & 0.0585\\
SynthID & 0.0580 & 0.0585\\
SIR & 0.0537 & 0.0585\\
EXP & 0.0563 & 0.0585\\
ITS & 0.0573 & 0.0585\\
\bottomrule
\end{tabular}

\end{table}

\paragraph{Summary.}
Token-level detectors need the generating tokenizer and sentence-level detectors
need the generating embedding model; with a replacement tokenizer or embedding model, most of the
watermark signal is lost. \ByteMark depends
on neither, so tokenizer replacement leaves its detection rate in Table~\ref{tab:tokenizer-domain-summary} unchanged. This is the
model-independence property of Section~\ref{sec:introduction}, and it is what
unknown-source detection requires: given the text and the key, one detector
verifies text from any participating model without identifying its source. Searching over candidate tokenizers does not remove this requirement for the baselines: it must be calibrated for multiple testing and still needs the generating tokenizer among its candidates.

\subsection{Detection continuity across model transitions}
\label{sec:exp-composition}

We test whether one unchanged detector remains effective when participating
models continue a text or re-embed the same watermark during rewriting.

\paragraph{Relay generation.}
In this experiment, Pythia-1.4B, Qwen2.5-1.5B, and OpenLLaMA-3B write successive segments of a C4
completion in turn, each embedding the watermark under the same key
(Appendix~\ref{sec:composition}). Figure~\ref{fig:relay-signal} (left) plots
the mean local signal in a sliding window of five tokens along the relay. A KGW
or ITS detector reads the whole text with one fixed tokenizer, so it finds the
signal only in the segment written with that tokenizer, and its curve stays
near zero in the other two. The \ByteMark curve holds its level across all three
segments and shows no visible change at either switch: the byte context carries
over each model boundary, and one detector reads the text as a single byte stream.

\paragraph{Rewriting with re-embedding.}
In this experiment, Parrot, DIPPER, and Mistral-7B-Instruct rewrite watermarked
Pythia-1.4B completions, each embedding the same \ByteMark watermark as it
rewrites. The unchanged \ByteMark detector separates the rewritten outputs from
human text under all three rewriters (Figure~\ref{fig:relay-signal}, right).
This result concerns the re-embedded signal. Whether the original signal
survives rewriting by a non-participating model is a robustness question shared
by all watermarking methods, and Appendix~\ref{sec:rewriting} reports that
setting.

\paragraph{Summary.} A token-level detector reads only the part of a relayed document written with its own tokenizer, whereas one \ByteMark detector follows the watermark across model switches and after rewriting by participating models. This is the continuity-across-models property of Section~\ref{sec:introduction}.

\begin{figure}[t]
\centering
\begin{minipage}[c]{.6\linewidth}
\centering
\includegraphics[width=\linewidth]{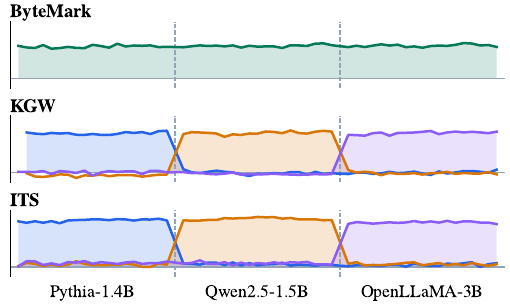}
\end{minipage}\hfill
\begin{minipage}[c]{.38\linewidth}
\centering\footnotesize
\setlength{\tabcolsep}{5pt}
\newcommand{\relaykey}[2]{\tikz[baseline=-.6ex]{\fill[#1, fill opacity=.5]
 (0,-.08) rectangle (.32,.12); \draw[#1, line width=.9pt] (0,.12) -- (.32,.12);}~#2}
\definecolor{relayB}{HTML}{047857}\definecolor{relayP}{HTML}{2563eb}%
\definecolor{relayQ}{HTML}{d97706}\definecolor{relayO}{HTML}{8b5cf6}%
\begin{tabular}{@{}l@{}}
\relaykey{relayB}{\ByteMark}\\
\relaykey{relayP}{Pythia tokenizer}\\
\relaykey{relayQ}{Qwen tokenizer}\\
\relaykey{relayO}{OpenLLaMA tokenizer}
\end{tabular}\par\vspace{10pt}
\ByteMark, participating rewriter\par\vspace{2pt}
\begin{tabular}{lc}
\toprule
Rewriter & AUROC \\
\midrule
Parrot & 0.9985\\
DIPPER & 0.9998\\
Mistral-7B & 0.9888\\
\bottomrule
\end{tabular}

\end{minipage}
\caption{Detection across model transitions. Left: mean local $z$ in a
sliding window of five tokens along the C4 relay; dashed lines mark model switches.
Right: \ByteMark AUROC after rewriting by participating models of Pythia-1.4B
completions.}
\label{fig:relay-signal}
\end{figure}

 \subsection{Generation efficiency}
\label{sec:exp-efficiency}

Sentence-level watermarks generate several candidates, keep one, and discard the rest (Section~\ref{sec:related}). We measure this cost in two ways: the number of candidate tokens generated per final output token, which counts the computation spent on discarded text and depends on the sampling mechanism rather than the implementation; and final-output tokens per second, the
speed a user observes.

Table~\ref{tab:main-generation-efficiency} reports both on the same requests.
The four sentence-level methods generate a large multiple of the tokens they
keep and reach only a small fraction of the unwatermarked throughput.
\ByteMark{} provides the contrast: it samples each token once, and its added
cost is scoring the vocabulary's bytes with a hardware-accelerated keyed
function at each step. A sentence-level watermark therefore
pays for its detection with generation efficiency, whereas \ByteMark{} keeps its throughput close to that of unwatermarked generation.

\begin{table}[H]
\centering
\small
\caption{Generation efficiency with OpenLLaMA-3B on 50 C4 texts.
Relative is the share of the unwatermarked rate (63.82 tok/s).
$^{*}$Different backend (vLLM).
Appendix~\ref{sec:generation-cost} gives the measurement conditions.}
\label{tab:main-generation-efficiency}
\begin{tabular}{lrrr}
\toprule
Method & Final tok/s & Relative (\%) & Generated / final \\
\midrule
Unwatermarked & 63.82 & 100.0 & 1.00\\
\ByteMark & 62.19 & 97.4 & 1.00\\
SemStamp & 1.73 & 2.7 & 35.38\\
k-SemStamp & 6.17 & 9.7 & 10.10\\
SimMark & 0.49 & 0.8 & 119.36\\
AliMark$^{*}$ & 18.22 & 28.6 & 100.68\\
\bottomrule
\end{tabular}

\end{table}

\section{Conclusion}
\label{sec:conclusion}
The token-level and sentence-level watermark tests we evaluate verify a presumed source: their detectors need the
source's tokenizer or embedding pipeline, which the text does not reveal, so a
negative result under one assumed source says little about whether a passage
is watermarked. \ByteMark removes this dependence by defining the watermark on
the UTF-8 bytes of the published text, so that detection needs only the text,
the key, and the public specification. In our experiments, replacing the
tokenizer or the embedding model leaves the token-level and sentence-level
baselines with little of their signal, whereas \ByteMark's detection rate is
unchanged; one detector follows a text across a relay of heterogeneous
generators and recovers the watermark that a participating rewriter re-embeds;
and sampling each output token once keeps throughput close to unwatermarked
generation, with an overhead that shrinks as models grow. A byte-level
watermark thus offers \textbf{unified detection} of text from any participating
model without identifying its source. Appendix~\ref{sec:discussion} discusses
the limitations.

\paragraph{Code availability.} The implementation and evaluation code will be released as open source after cleanup.

\par
\endgroup
\section*{Ethics Statement}
This work uses public corpora and public models and involves no human
subjects.  A watermark detector supports claims about the origin of a text,
and a false claim can harm the person it is attributed to.  We therefore
report detection at frozen thresholds calibrated on held-out human text and
state the realized false-positive rates, and we recommend that a positive
detection be treated as evidence rather than proof.  The key must remain
secret; its loss exposes the watermark of every participant, as discussed in
Appendix~\ref{sec:discussion}.

\section*{AI Use Statement}
Generative AI tools were used to summarize literature, analyze manuscript
structure, translate draft material, edit the manuscript for readability
and consistency, and assist with editing the experimental code.  We take
responsibility for the final content of this work, including text and artifacts
produced with the aid of generative AI.

\bibliographystyle{plainnat}
\bibliography{references}

\appendix
\newenvironment{story}{\par}{\par}
\newcommand{\sectionlead}[1]{\begin{story}#1\end{story}}
\newcommand{\matchedpair}[2]{\shortstack{#1\\#2}}
\newcommand{\replacementpair}[3]{%
  \ifdim #3pt<0.2pt\relax
    \cellcolor{yellow!20}\shortstack{#2\\#3}
  \else
    \shortstack{#2\\#3}%
  \fi}
\newcommand{\matchedcellslash}[1]{%
  \begin{tikzpicture}[baseline=0pt]
    \path[use as bounding box] (0,0) rectangle (8pt,15pt);
    \draw[line width=.4pt] (0,0) -- (4pt,7.5pt);
    \draw[line width=.4pt] (4pt,7.5pt) -- (8pt,15pt);
  \end{tikzpicture}%
}

\newcommand{\lowtprcell}[1]{%
  \ifcsvstrcmp{#1}{--}{#1}{%
    \ifdim #1pt<0.2pt\relax\cellcolor{yellow!20}\fi #1}%
}
\providecommand{\PRF}{F}   
\providecommand{\claudedel}[1]{\textcolor{red}{\sout{#1}}}
\providecommand{\claudeadd}[1]{\textcolor{red}{#1}}
\let\appendixSavedTable\table
\let\appendixSavedEndTable\endtable
\renewenvironment{table}[1][]{\appendixSavedTable[H]}{\appendixSavedEndTable}
\section*{Supplementary Material}
\phantomsection\addcontentsline{toc}{section}{Supplementary Material}

\section{Specification of byte-level generation and detection}
\label{sec:algorithm}
\sectionlead{The construction connects two interfaces: a generator chooses tokens,
while a verifier reads text. Both evaluate the same keyed function on the same
sequence of UTF-8 context--byte pairs. Token boundaries are needed to apply the
generation bias, but disappear from the detection procedure.}

\subsection{UTF-8 representation}
UTF-8 provides a standardized byte representation for Unicode text: it encodes
each Unicode scalar value in one to four bytes \citep{unicode16,rfc3629}. It is
also prevalent on the web: in W3Techs' 25 September 2026 survey, UTF-8 accounts
for 99.1\% of websites with an identified character encoding
\citep{w3techs2026encoding}. ByteMark uses the published text's UTF-8 byte stream
as a common carrier across generators, so a verifier can apply the same detection
rule without knowing the generator's token boundaries.

In our experiments, every text is produced directly as UTF-8 bytes through the
tokenizer's byte table, so no transcoding is involved. For text that a
deployment receives in another encoding, we recommend transcoding it to UTF-8
before detection. Text in another Unicode encoding, such as UTF-16 or UTF-32,
can be converted without additional Unicode normalization. Text in a legacy
character encoding, such as ISO-8859-1, Windows-1252, or Shift JIS, can be
decoded to Unicode through the published mapping of its declared encoding and
then re-encoded as UTF-8.

\subsection{Tokenizer coverage of the byte interface}
\label{sec:tokenizer-coverage}
This subsection checks whether diverse tokenizer vocabularies can supply the
exact output bytes needed to score candidate tokens during generation. For each
token ID $v$, the byte table must define $\beta(v)$, its contribution to the
UTF-8 output (Section~\ref{sec:byte-table}). We surveyed 35 public
causal-language-model tokenizers with pairwise distinct byte tables, including
byte-level BPE and SentencePiece-style vocabularies
(Table~\ref{tab:tokenizer-byte-coverage}). Every ID had a defined entry:
ordinary tokens contribute a nonempty byte sequence, while designated
zero-width special tokens contribute no output bytes.

\paragraph{UTF-8 byte-interface compatibility.}
A token ID $v$ is compatible if the byte table assigns it a determinate
output-byte sequence $\beta(v)$ under the tokenizer's decoding rules, without
replacing unresolved byte fragments with a replacement character. Let
$\mathcal V$ include all vocabulary IDs, including added and special tokens. We
measure compatibility as
\[
 \operatorname{Compatibility}(\mathcal V)
 =\frac{|\{v\in\mathcal V:\beta(v)\text{ is defined}\}|}{|\mathcal V|}.
\]
For example, OPT-1.3B's end-of-sequence token \texttt{</s>} maps to the empty
byte sequence and is therefore compatible. Further mapping details are given in
the reference implementation.
\begin{table}[tbp]
\centering\small
\caption{Token-to-byte mapping coverage for the 35 surveyed causal-language-model vocabularies. Every vocabulary ID has a defined byte contribution (100\% UTF-8 byte-interface compatibility).}
\label{tab:tokenizer-byte-coverage}
\setlength{\tabcolsep}{3pt}
\begin{tabular}{@{}lr@{\hspace{1em}}lr@{}}
\toprule
Tokenizer & IDs & Tokenizer & IDs \\
\midrule
xglm-564M & 256,008 & deepseek-llm-7b-base & 100,015 \\
bloom-560m & 250,680 & Jamba-v0.1 & 65,536 \\
MiniMax-M2 & 200,054 & falcon-7b & 65,024 \\
gpt-oss-120b & 200,019 & gpt-j-6b & 50,400 \\
Ling-1T & 157,153 & phi-2 & 50,295 \\
Seed-OSS-36B-Base & 155,121 & OLMoE-1B-7B-0924 & 50,280 \\
Qwen3-Coder-480B-A35B & 151,669 & gpt-neox-20b & 50,277 \\
Qwen2.5-72B & 151,665 & opt-1.3b & 50,265 \\
Qwen2-0.5B & 151,646 & gpt-neo-1.3B & 50,257 \\
GLM-4.5 & 151,365 & starcoder2-15b & 49,152 \\
Falcon3-1B-Base & 131,072 & granite-3.1-2b-base & 49,152 \\
DeepSeek-V3 & 128,815 & SmolLM2-1.7B & 49,152 \\
Nemotron-Super-49B & 128,256 & Mistral-Large-Instruct-2411 & 32,768 \\
EuroLLM-1.7B & 128,000 & Phi-3-mini-4k-instruct & 32,011 \\
EXAONE-3.5-2.4B-Instruct & 102,400 & Llama-2-7b & 32,000 \\
ERNIE-4.5-300B-A47B-Base-PT & 101,304 & Mixtral-8x22B & 32,000 \\
stablelm-2-1\_6b & 100,289 & open\_llama\_3b\_v2 & 32,000 \\
OLMo-2-32B & 100,278 &  &  \\
\bottomrule
\end{tabular}
\par\vspace{3pt}\footnotesize
A defined contribution may be a byte fragment or an empty special-token entry; individual tokens need not decode as complete UTF-8 characters. This is a mapping check, not a generation benchmark.
\end{table}

\subsection{Pseudorandom function}
\label{sec:keyed-hash}
ByteMark requires a keyed pseudorandom function (PRF) that maps each context--byte
input to a pseudorandom score; the function is not used to encrypt the text. The main
instantiation is AES-CMAC under a 128-bit key \citep{rfc4493}: AES is a
standardized block cipher, and CMAC turns it into a message authentication code
whose output serves as the PRF value. The same key and context--byte input
always reproduce the same score. Other standardized keyed PRFs can take
its place without changing the construction; the reference implementation also
supports keyed BLAKE3 \citep{oconnor2020blake3} and KMAC256
\citep{kelsey2016sp800185}.

\subsection{Detector key format}
\label{sec:detector-key-format}
A ByteMark key is a single string containing the construction, text encoding,
PRF, byte-context width, key material, and checksum, in that order. For the main
configuration, it has the form
\texttt{bytemark.utf8.aes-cmac.h8.SECRET.CHECK}. Under the fixed ByteMark
specification, this string determines the watermark's keyed byte-level signal.
A verifier given the published text and this string can compute the continuous
detection statistic without the generating model or its tokenizer.

\subsection{Byte table and candidate scores}
\label{sec:byte-table}
To score a candidate token, the generator needs its exact output-byte sequence
$\beta(v)=(b_1,\ldots,b_{L_v})$. The byte table recovers this sequence from the
tokenizer's representation, including byte-level BPE and SentencePiece
byte-fallback pieces. Decoding an isolated token can replace an incomplete UTF-8
fragment, so the table is not built from per-token \texttt{decode()} calls.
Special tokens omitted from the output map to the empty sequence; output-head
rows without token strings are masked.

Using the main-text notation, $u_b(c)$ is the keyed score of byte $b$ in
context $c$, with the fixed key $k$ implicit. For a candidate's $j$th byte, let
$c_j=\operatorname{last}_h(c\Vert b_1\Vert\cdots\Vert b_{j-1})$.
Here $\operatorname{last}_h$ retains at most the final $h$ bytes.
The candidate score, including the empty-sequence case, is
\begin{equation}
 R(v;c)=\begin{cases}
 \displaystyle\frac{1}{L_v}\sum_{j=1}^{L_v}u_{b_j}(c_j), & L_v>0,\\
 0, & L_v=0.
 \end{cases}
 \label{eq:candidate}
\end{equation}

\begin{algorithm}[H]
\caption{ByteMark generation through a token-based model}
\label{alg:gen}
\begin{algorithmic}[1]
\Require model $M$, byte table $\beta$, prompt $p$, key $k$, context width $h$,
 boosted fraction $\gamma$, logit increment $\delta$
\State Set $y$ and $c$ to the empty sequence
\While{the chosen stopping rule permits a step}
 \State $\ell\gets M(p\Vert y)$
 \ForAll{active vocabulary entries $v$}
 \State Compute $R(v;c)$ using Equation~\ref{eq:candidate}
 \EndFor
 \State $G\gets\operatorname{Top}_\gamma R$, the $\lfloor\gamma|\mathcal
 V|\rfloor$ highest-scoring entries
 \State $\ell'_v\gets\ell_v+\delta\,\mathbf 1[v\in G]$ for each $v$
 \State Sample $v$ from $\operatorname{softmax}(\ell')$; append $v$ to $y$
 \State $c\gets\operatorname{last}_h(c\Vert\beta(v))$
\EndWhile
\State Return the generated token sequence $y$
\end{algorithmic}
\end{algorithm}

\subsection{Detection from the completion bytes}
Detection scans the completion's UTF-8 encoding and retains one observation per
distinct context--byte pair. The rolling context advances at every byte, including
repeated pairs. For bytes $b_1,\ldots,b_N$, use the main-text definitions
$c_i=\operatorname{last}_h(b_1\Vert\cdots\Vert b_{i-1})$ and $u_i=u_{b_i}(c_i)$.
After retaining each distinct pair once, reindex the retained scores as
$u_1,\ldots,u_m$ and write their mean as $\bar u$.
The continuous statistic is
\begin{equation}
 z(x)=\sqrt{12m}\left(\bar u-\frac12\right).
 \label{eq:z}
\end{equation}

\begin{algorithm}[H]
\caption{ByteMark detection without a tokenizer or embedding model}
\label{alg:detect}
\begin{algorithmic}[1]
\Require nonempty text $x$, key $k$, context width $h$, threshold $\tau$
\State Set $c$ to the empty byte sequence; $D\gets\varnothing$; $S\gets0$; $m\gets0$
\ForAll{bytes $b$ in $\mathrm{UTF8}(x)$, in order}
 \If{$(c,b)\notin D$}
 \State Add $(c,b)$ to $D$; $S\gets S+u_b(c)$; $m\gets m+1$
 \EndIf
 \State $c\gets\operatorname{last}_h(c\Vert b)$
\EndFor
\State $\bar u\gets S/m$; $z(x)\gets\sqrt{12m}(\bar u-1/2)$
\State \Return $z(x)$ and the decision $z(x)>\tau$
\end{algorithmic}
\end{algorithm}

\section{Reconstruction and null calibration of the detection statistic}
\label{sec:theory}
\subsection{Invariance of the byte scores to token boundaries}
For an emitted token $v$ with $\beta(v)=(b_1,\ldots,b_{L_v})$, generation
scores its $j$th byte as $u_{b_j}(c_j)$, where
$c_j=\operatorname{last}_h(c\Vert b_1\Vert\cdots\Vert b_{j-1})$.
Let $i$ be this byte's position in the completed text $x$. Detection encounters
the same preceding bytes and therefore reconstructs $c_i=c_j$ and
$u_i=u_{b_j}(c_j)$. This equality holds for every emitted byte, regardless of
the token boundaries. Thus, the text, key, and public specification suffice to
reconstruct the byte scores.

\subsection{Effect of the logit bias}
The watermark signal is a shift in the keyed byte scores of the emitted text. At
a fixed decoding step, the logit bias shifts probability toward the highest-scoring candidates. We show below that this cannot decrease the expected candidate score.

The size of the shift at one step follows from the change in sampling
probabilities. Condition on the step's history and logits. As in the main text,
let $p=\operatorname{softmax}(\ell)$, $q=\operatorname{softmax}(\ell')$, and
$G=\operatorname{Top}_\gamma R$, and write $p_G=\sum_{v\in G}p(v)$ and
$q_G=\sum_{v\in G}q(v)$. At temperature one, adding $\delta$ to the logits in
$G$ multiplies their unnormalized softmax weights by $e^\delta$, so the
normalizing total becomes $(1-p_G)+e^\delta p_G=1+(e^\delta-1)p_G$. Hence
\begin{equation}
 q(v)=\frac{p(v)e^{\delta\mathbf1[v\in G]}}
 {1+(e^\delta-1)p_G},\qquad
 q_G-p_G=\frac{(e^\delta-1)p_G(1-p_G)}{1+(e^\delta-1)p_G}.
\end{equation}
This is the conditional probability transformation analyzed for logit-based
watermarking by \citet{wang2026powercalibrated}, applied here to
$G=\operatorname{Top}_\gamma R$.

\paragraph{Proof of Proposition~\ref{prop:one-step-score}.}
\label{sec:one-step-proof}
Fix the history, key, and logits at a generation step, so that $c$, all
scores $R(v;c)$, and $G$ are fixed before sampling. Write $R(v)=R(v;c)$ and
$\bar G=\mathcal V\setminus G$, and assume $\delta\geq0$. Expectations below
are over the sampled token only: $\E_s[R]=\sum_{v\in\mathcal V}s(v)R(v)$ for
$s\in\{p,q\}$. First suppose $0<p_G<1$, and define
\[
 \mu_G=\frac{\sum_{v\in G}p(v)R(v)}{p_G},\qquad
 \mu_{\bar G}=\frac{\sum_{v\in\bar G}p(v)R(v)}{1-p_G}.
\]
Let $Z=1+(e^\delta-1)p_G$. The preceding transformation gives
$q_G=e^\delta p_G/Z$ and $1-q_G=(1-p_G)/Z$. Hence
\[
 \frac{q(v)}{q_G}=\frac{p(v)}{p_G}\quad(v\in G),\qquad
 \frac{q(v)}{1-q_G}=\frac{p(v)}{1-p_G}\quad(v\in\bar G).
\]
Thus both conditional means are unchanged by the boost. Splitting each
expectation over the two sets yields
\begin{equation}
\begin{aligned}
 \E_q[R]-\E_p[R]
 &=\bigl[q_G\mu_G+(1-q_G)\mu_{\bar G}\bigr]
 -\bigl[p_G\mu_G+(1-p_G)\mu_{\bar G}\bigr]\\
 &=(q_G-p_G)(\mu_G-\mu_{\bar G})\\
 &=\frac{(e^\delta-1)p_G(1-p_G)}{Z}
 (\mu_G-\mu_{\bar G}).
\end{aligned}
\end{equation}
By the definition of $G=\operatorname{Top}_\gamma R$, every $v\in G$ and
$w\in\bar G$ satisfy $R(v)\geq R(w)$, including ties at the cutoff. Therefore
\[
 \mu_G-\mu_{\bar G}
 =\frac{\sum_{v\in G}\sum_{w\in\bar G}
 p(v)p(w)\bigl(R(v)-R(w)\bigr)}{p_G(1-p_G)}\geq0.
\]
Since $Z>0$ and $e^\delta-1\geq0$, it follows that
$\E_q[R]\geq\E_p[R]$. For $0<p_G<1$, the inequality is strict exactly when
$\delta>0$ and $\mu_G>\mu_{\bar G}$. If $p_G=0$ or $p_G=1$, the boost
leaves $q=p$, so the expected score is unchanged without requiring either
undefined conditional mean.

This conditional, one-step result does not determine the document-level
signal: candidate lengths, shared prefixes, repeated context--byte inputs, and
the contexts inherited by later steps all affect it, so we measure that signal
empirically.

Without the bias, every fixed nonempty candidate has expected score $1/2$
under an ideal PRF, and a candidate with $L_v$ distinct inputs has variance
$1/(12L_v)$. Averaging in Equation~\ref{eq:candidate}, rather than summing,
keeps the expected score from growing linearly with candidate length. The coverage diagnostics in Section~\ref{sec:diagnostics} examine which byte classes contribute this evidence.

\subsection{Proof of Theorem~\ref{thm:compact-calibration}}
\label{sec:null-proof}
\begin{proof}
\emph{Moments.} Fix a text independently of the ideal PRF, with $m>0$ distinct
context--byte inputs. Thus the retained $u_1,\ldots,u_m$ are independent
uniform values with expectation $1/2$ and variance $1/12$. Consequently,
\[
 \E[\bar u]=\frac12,\qquad \Var(\bar u)=\frac{1}{12m},\qquad
 \E[z]=0,\qquad \Var(z)=1.
\]

\emph{Exact distribution.} For a real argument $t\in[0,m]$, the CDF
$F_m(t)=\Pr[m\bar u\leq t]$ equals the volume of
$\{(u_1,\ldots,u_m)\in[0,1]^m:\sum_{i=1}^m u_i\leq t\}$.
Without the upper bounds $u_i\leq1$, this simplex has volume $t^m/m!$.
For a subset $J\subseteq\{1,\ldots,m\}$ of size $j$, requiring $u_i>1$
for every $i\in J$ and subtracting one from these coordinates gives volume
$\max(t-j,0)^m/m!$. Inclusion--exclusion over the $\binom mj$ subsets
of each size therefore yields
\begin{equation}
 F_m(t)=\frac{1}{m!}\sum_{j=0}^{\lfloor t\rfloor}(-1)^j\binom{m}{j}(t-j)^m,
 \qquad 0\leq t\leq m.
\end{equation}
\emph{Exact threshold.} Since
$z(x)=(m\bar u-m/2)/\sqrt{m/12}$, for a threshold $\tau$,
\[
 \Pr[z>\tau]=1-F_m\!\left(m/2+\tau\sqrt{m/12}\right).
\]
The CDF $F_m$ is continuous and strictly increasing on $[0,m]$.
Hence, for every $0<\alpha<1$, the ideal upper-tail level-$\alpha$ threshold is
\begin{equation}
 \tau(m,\alpha)=\frac{F_m^{-1}(1-\alpha)-m/2}{\sqrt{m/12}}.
\end{equation}
Substitution gives $\Pr[z>\tau(m,\alpha)]=\alpha$ exactly.

\emph{Asymptotic calibration.} The central limit theorem gives
$z(x)\xrightarrow{d}\mathcal N(0,1)$ as $m\to\infty$.
Because the standard normal CDF $\Phi$ is continuous and strictly increasing,
the corresponding quantiles converge:
$\tau(m,\alpha)\to\Phi^{-1}(1-\alpha)$ for each fixed $0<\alpha<1$.
\end{proof}

\subsection{Deduplication of repeated inputs}
A repeated context--byte pair reproduces exactly the same PRF score, rather than
an independent observation. Fix a text $x$ containing $N$ context--byte
occurrences and $m$ distinct pairs, where $m\leq N$. Let $u_1,\ldots,u_m$
be the distinct-pair scores, and let $A_i\geq1$ denote the multiplicity of
pair $i$, so that $\sum_{i=1}^m A_i=N$. Deduplication retains $m$ terms;
counting every occurrence retains $N$ terms, which can be grouped into $m$
terms weighted by $A_i$. Grouping repeated terms does not remove their
multiplicities.

\paragraph{Counting all occurrences with an independence assumption.}
Under the fixed-text ideal null of Theorem~\ref{thm:compact-calibration},
normalizing the sum over all $N$ occurrences as if they were independent gives
\[
 z_{\mathrm{naive}}(x)=\frac{\sum_{i=1}^m A_i(u_i-1/2)}{\sqrt{N/12}},
 \qquad
 \operatorname{Var}(z_{\mathrm{naive}})=\frac{\sum_{i=1}^m A_i^2}{N}.
\]
This variance exceeds one whenever any input repeats. Standard-normal
calibration is therefore unjustified and can understate the false-positive
rate. \citet[Appendix~A.3]{kirchenbauer2024reliability} report this discrepancy
for repeated token $n$-grams and show that excluding repeats improves calibration.
\citet[Sections~III.B--III.C]{fernandez2023three} address both finite-sample
tail approximation and repeated-input dependence, combining non-asymptotic
tests with exclusion of repeated context--token tuples.

\paragraph{Counting all occurrences with the correct null variance.}
Retaining the multiplicities and accounting for their covariance gives
\[
 z_{\mathrm{rep}}(x)=
 \frac{\sum_{i=1}^m A_i(u_i-1/2)}
 {\sqrt{\frac1{12}\sum_{i=1}^m A_i^2}}.
\]
For fixed multiplicities under the ideal null, its denominator is the standard
deviation of the numerator: the $A_i$ copies of $u_i$ are identical, so their
variance contribution is $A_i^2/12$, not $A_i/12$.
Unit variance alone does not make this weighted sum standard normal, so its
tail probabilities still require a suitable null distribution.

\paragraph{Counting each distinct input once.}
ByteMark removes the multiplicity weights and uses
\[
 z(x)=\frac{\sum_{i=1}^m(u_i-1/2)}{\sqrt{m/12}}.
\]
The denominator uses the number $m$ of distinct inputs because these $m$
scores are independent under the ideal null. Theorem~\ref{thm:compact-calibration}
gives the resulting exact null distribution; standard-normal calibration is
its large-$m$ approximation.

\paragraph{Standardized signal under a common shift.}
We compare $z(x)$ and $z_{\mathrm{rep}}(x)$ with the same $m$ distinct inputs
and fixed multiplicities $A_1,\ldots,A_m$, using their respective null standard
deviations. Assume a common-shift alternative
$\mathbb E[u_i]=1/2+\varepsilon$ for every distinct pair, with
$\varepsilon>0$ and the multiplicities held fixed. The expected scores,
measured in their respective null standard deviations, are
\[
 \mathbb E[z(x)]=\sqrt{12m}\,\varepsilon,\qquad
 \mathbb E[z_{\mathrm{rep}}(x)]=
 \sqrt{12}\,\varepsilon\,
 \frac{\sum_{i=1}^m A_i}{\sqrt{\sum_{i=1}^m A_i^2}}.
\]
Their ratio is therefore
\[
 \frac{\mathbb E[z_{\mathrm{rep}}(x)]}{\mathbb E[z(x)]}
 =\frac{\sum_{i=1}^m A_i}{\sqrt{m}\sqrt{\sum_{i=1}^m A_i^2}}\leq1,
\]
where Cauchy--Schwarz gives
$(\sum_i A_i)^2\leq m\sum_i A_i^2$, with equality exactly when all
multiplicities are equal. Thus, under the common-shift assumption, counting
repeated inputs provides no
benefit in standardized mean signal.

\section{Evaluation protocol}
\label{sec:protocol}

\paragraph{Generators, tokenizers, and corpora.}
The main experiments use Pythia-1.4B~\citep{biderman2023pythia},
Qwen2.5-1.5B~\citep{qwen2025qwen25}, and
OpenLLaMA-3B~\citep{geng2023openllama}, with their released tokenizers
spanning byte-level BPE and SentencePiece~\citep{kudo2018sentencepiece}.
The tokenizers of OPT-1.3B~\citep{zhang2022opt},
GPT-Neo-1.3B~\citep{black2021gptneo}, and
TinyLlama-1.1B~\citep{zhang2024tinyllama} are used only for detection.
Prompts and unwatermarked human texts come from C4~\citep{raffel2020t5},
Wikipedia articles in WikiText-103~\citep{merity2016wikitext}, and
LFQA~\citep{blagojevic2021lfqa}.

\paragraph{Baselines.}
Among token-level methods, we compare with the token-identity methods
KGW~\citep{kirchenbauer2023watermark}, EXP and
ITS~\citep{kuditipudi2023robust}, and
SynthID-Text~\citep{dathathri2024synthid}, as well as
SIR~\citep{liu2024semantic}, whose semantic features condition a
tokenizer-specific vocabulary mapping. The sentence-level baselines are
SemStamp~\citep{hou2024semstamp}, k-SemStamp~\citep{hou2024ksemstamp},
SimMark~\citep{li2025simmark}, and AliMark~\citep{li2026alimark}. We evaluate them under
representation mismatch by replacing the detector's native embedding model
while holding the generated text fixed.
Each baseline is run through its authors' released generation and detection
implementation. EXP and ITS follow the authors' conditional randomization procedure
\citep{kuditipudi2023robust}.

\subsection{Cohorts and evaluation lengths}
The token-level comparisons span C4, Wikipedia, and LFQA with Pythia-1.4B,
Qwen2.5-1.5B, and OpenLLaMA-3B. The semantic comparisons use C4 with the generation models of the original implementations and with Pythia-1.4B; some additional tests use Wikipedia with
Pythia-1.4B. In both families,
each source condition uses 500 generation requests and each domain supplies 1,000
shared human texts. Token-level generation uses a budget of 200 native tokens.
The matched-detection and tokenizer-replacement tables score prefixes of at most
200 reference tokens; shorter texts retain their original length. Tokenizer
replacement scores each generated text with its native tokenizer and with five
alternatives, without additional generation.

Detection metrics are computed over valid, scoreable texts. Paired analyses use the common
valid subset. Invalid entries, whether rejected generations or unscoreable
texts, are excluded and not replaced.

Sentence-level protocols retain their specified segmentation, candidate
generation, stopping rules, and timeout settings. Embedding-model replacement
retains the generated text and the method-specific watermark parameters;
incompatible configurations are reported as unscoreable rather than assigned
zero detection rates.

\subsection{Thresholds, true positive rates, and reference distributions}
\label{sec:thresholds}
For the displayed empirical comparisons, a detector's threshold is selected from
the shared human scores at a target 1\% exceedance rate. TPR is then measured on
scoreable watermarked positives. AUROC compares positive and human scores.

\subsection{Method configurations}
ByteMark uses AES-CMAC-128 with $h=8$ and the same $(\gamma,\delta)=(.25,2)$ as
KGW \citep{kirchenbauer2024reliability}; generation uses temperature one,
disabled top-$k$, and top-$p=1$. For both ByteMark and KGW, the logit increment
$\delta$ sets the watermark strength, and adjusting $\delta$ trades detection
rate against generation quality \citep{kirchenbauer2023watermark}.

\paragraph{Generation length.} ByteMark, KGW, and SynthID set
 \texttt{min\_new\_tokens} to the budget, which suppresses EOS before 200
 tokens. If the byte stream is incomplete at the budget, generation may continue
 for at most three additional tokens to complete a UTF-8 character; a completion
 that remains invalid UTF-8 is rejected without replacement. EXP and ITS run
 the authors' fixed number of sampling steps, where EOS neither stops nor is
 suppressed.

Semantic methods retain their author-specified preprocessing, generation, and
detection procedures
\citep{li2025simmark,hou2024semstamp,hou2024ksemstamp,li2026alimark}; only the
component named in each replacement experiment is changed. Structural
incompatibility, such as a replacement dimension inconsistent with stored
centroids, is recorded as undefined.

\subsection{Overview of the experiments}
\begin{center}\small
\begin{tabular}{@{}l>{\raggedright\arraybackslash}p{.36\linewidth}>{\raggedright\arraybackslash}p{.34\linewidth}@{}}
\toprule
Experiment & Varied component & Held fixed \\
\midrule
Matched detection & None (reference condition) & Generator, text, and detector \\
\midrule
Tokenizer replacement & Detector tokenizer or vocabulary mapping & Generated texts \\
\midrule
Candidate search & Detector over a finite candidate set & Human texts \\
\midrule
Embedding replacement & Detection embedding model & Generation and stored geometric artifacts \\
\midrule
Generator replacement & Generating model & Author watermark parameters \\
\midrule
Model relay & Generator at each relay stage & Watermark key and byte context \\
\midrule
Editing and rewriting & Text after generation & Clean-text detection threshold \\
\midrule
Statistical diagnostics & Length, statistic, character class, human corpus & Key and detector \\
\bottomrule
\end{tabular}
\end{center}

\section{Detection under representation replacement}
\label{sec:core}
\sectionlead{We first evaluate matched detection, then replace the detector's representation while keeping the generated text fixed. For token-level methods the verifier
replaces the tokenizer; for semantic methods it replaces the detection encoder. A
separate experiment changes the generation model instead, which separates dependence
on the generator from dependence on the detection representation. Search over
candidate tokenizers is treated separately in Section~\ref{sec:search}.}

\subsection{Matched detection}
Table~\ref{tab:e1} compares matched detection across source models and domains.
All methods exhibit strong detection performance under matched conditions
across the evaluated source models and domains.
\begin{table}[H]
\centering\small
\caption{Matched detection. Entries are AUROC / TPR.}
\label{tab:e1}
\maxsizebox{\linewidth}{!}{%
\begin{tabular}{llcccccc}
\toprule
Domain & Generator & \ByteMark & KGW & SynthID & SIR & EXP & ITS \\
\midrule
\ifcsvstrcmp{Pythia}{Pythia}{\ifnum1>1 \midrule\fi}{}\ifcsvstrcmp{Pythia}{Pythia}{C4}{} & Pythia & 0.9997 / 0.996 & 0.9998 / 0.996 & 0.9995 / 0.998 & 0.9993 / 0.984 & 0.9982 / 0.996 & 0.9911 / 0.980\\
\ifcsvstrcmp{Qwen}{Pythia}{\ifnum2>1 \midrule\fi}{}\ifcsvstrcmp{Qwen}{Pythia}{C4}{} & Qwen & 1.0000 / 0.998 & 1.0000 / 1.000 & 1.0000 / 1.000 & 0.9900 / 0.944 & 1.0000 / 1.000 & 0.9994 / 0.992\\
\ifcsvstrcmp{OpenLLaMA}{Pythia}{\ifnum3>1 \midrule\fi}{}\ifcsvstrcmp{OpenLLaMA}{Pythia}{C4}{} & OpenLLaMA & 0.9989 / 0.992 & 0.9998 / 0.994 & 1.0000 / 1.000 & 0.9991 / 0.986 & 0.9990 / 0.996 & 0.9944 / 0.980\\
\ifcsvstrcmp{Pythia}{Pythia}{\ifnum4>1 \midrule\fi}{}\ifcsvstrcmp{Pythia}{Pythia}{WIKI}{} & Pythia & 1.0000 / 1.000 & 1.0000 / 1.000 & 1.0000 / 1.000 & 0.9991 / 0.994 & 1.0000 / 1.000 & 0.9994 / 0.994\\
\ifcsvstrcmp{Qwen}{Pythia}{\ifnum5>1 \midrule\fi}{}\ifcsvstrcmp{Qwen}{Pythia}{WIKI}{} & Qwen & 0.9976 / 0.978 & 0.9994 / 0.984 & 1.0000 / 1.000 & 0.9887 / 0.914 & 1.0000 / 1.000 & 0.9877 / 0.928\\
\ifcsvstrcmp{OpenLLaMA}{Pythia}{\ifnum6>1 \midrule\fi}{}\ifcsvstrcmp{OpenLLaMA}{Pythia}{WIKI}{} & OpenLLaMA & 1.0000 / 1.000 & 0.9999 / 0.996 & 1.0000 / 1.000 & 0.9991 / 0.982 & 1.0000 / 1.000 & 0.9953 / 0.982\\
\ifcsvstrcmp{Pythia}{Pythia}{\ifnum7>1 \midrule\fi}{}\ifcsvstrcmp{Pythia}{Pythia}{LFQA}{} & Pythia & 1.0000 / 1.000 & 1.0000 / 1.000 & 1.0000 / 1.000 & 0.9987 / 0.986 & 0.9990 / 1.000 & 0.9987 / 0.994\\
\ifcsvstrcmp{Qwen}{Pythia}{\ifnum8>1 \midrule\fi}{}\ifcsvstrcmp{Qwen}{Pythia}{LFQA}{} & Qwen & 0.9996 / 0.990 & 0.9992 / 0.998 & 1.0000 / 1.000 & 0.9863 / 0.906 & 0.9998 / 0.998 & 0.9949 / 0.980\\
\ifcsvstrcmp{OpenLLaMA}{Pythia}{\ifnum9>1 \midrule\fi}{}\ifcsvstrcmp{OpenLLaMA}{Pythia}{LFQA}{} & OpenLLaMA & 0.9999 / 0.998 & 1.0000 / 1.000 & 1.0000 / 1.000 & 0.9995 / 0.988 & 0.9995 / 1.000 & 0.9982 / 0.990\\
\bottomrule
\end{tabular}
}
\end{table}

\subsection{Tokenizer replacement}
We hold the generated texts fixed and replace the detector's tokenizer or
vocabulary mapping. Under tokenizer mismatch, the other methods largely lose
their detection capability, whereas ByteMark's detection scores remain unchanged.
Tables~\ref{tab:e2}--\ref{tab:e02-its} report the replacement results, with
cells with TPR below $0.2$ shaded pale yellow.

SIR is included here because its detector uses a vocabulary mapping as well as
semantic conditioning. The model-matched comparison aligns each generator with
its native tokenizer before testing replacements.
\begingroup
\setlength{\intextsep}{8pt}
\begin{table}[H]
\centering\small
\caption{KGW: matched versus replacement tokenizers. Each cell shows AUROC (top) and TPR (bottom). Cells with TPR below $0.2$ are shaded pale yellow.}
\label{tab:e2}
\begingroup
\setlength{\tabcolsep}{3.5pt}
\renewcommand{\arraystretch}{1.0}
\maxsizebox{\linewidth}{!}{%
\begin{tabular}{@{}llc@{\hspace{8pt}}cccccc@{}}
\toprule
 & & \multicolumn{1}{c}{Matched} & \multicolumn{6}{c}{Replacement detector tokenizer} \\
\cmidrule(lr){3-3}\cmidrule(l){4-9}
Domain & Generator & Native & Pythia & Qwen & OpenLLaMA & OPT & GPT-Neo & TinyLLaMA \\
\midrule
\ifcsvstrcmp{0}{1}{\addlinespace[3pt]}{}C4 & Pythia & \matchedpair{0.9998}{0.996} & \ifcsvstrcmp{matched}{matched}{\matchedcellslash{Pythia}}{\replacementpair{}{matched}{}} & \ifcsvstrcmp{0.4560}{matched}{\matchedcellslash{Qwen}}{\replacementpair{}{0.4560}{0.016}} & \ifcsvstrcmp{0.5053}{matched}{\matchedcellslash{OpenLLaMA}}{\replacementpair{}{0.5053}{0.010}} & \ifcsvstrcmp{0.4954}{matched}{\matchedcellslash{OPT}}{\replacementpair{}{0.4954}{0.020}} & \ifcsvstrcmp{0.4181}{matched}{\matchedcellslash{GPT-Neo}}{\replacementpair{}{0.4181}{0.006}} & \ifcsvstrcmp{0.4368}{matched}{\matchedcellslash{TinyLLaMA}}{\replacementpair{}{0.4368}{0.012}}\\\addlinespace[0pt]
\ifcsvstrcmp{0}{1}{\addlinespace[3pt]}{} & Qwen & \matchedpair{1.0000}{1.000} & \ifcsvstrcmp{0.4591}{matched}{\matchedcellslash{Pythia}}{\replacementpair{}{0.4591}{0.006}} & \ifcsvstrcmp{matched}{matched}{\matchedcellslash{Qwen}}{\replacementpair{}{matched}{}} & \ifcsvstrcmp{0.4385}{matched}{\matchedcellslash{OpenLLaMA}}{\replacementpair{}{0.4385}{0.010}} & \ifcsvstrcmp{0.5373}{matched}{\matchedcellslash{OPT}}{\replacementpair{}{0.5373}{0.032}} & \ifcsvstrcmp{0.4868}{matched}{\matchedcellslash{GPT-Neo}}{\replacementpair{}{0.4868}{0.012}} & \ifcsvstrcmp{0.4901}{matched}{\matchedcellslash{TinyLLaMA}}{\replacementpair{}{0.4901}{0.014}}\\\addlinespace[0pt]
\ifcsvstrcmp{0}{1}{\addlinespace[3pt]}{} & OpenLLaMA & \matchedpair{0.9998}{0.994} & \ifcsvstrcmp{0.5164}{matched}{\matchedcellslash{Pythia}}{\replacementpair{}{0.5164}{0.008}} & \ifcsvstrcmp{0.5127}{matched}{\matchedcellslash{Qwen}}{\replacementpair{}{0.5127}{0.016}} & \ifcsvstrcmp{matched}{matched}{\matchedcellslash{OpenLLaMA}}{\replacementpair{}{matched}{}} & \ifcsvstrcmp{0.5766}{matched}{\matchedcellslash{OPT}}{\replacementpair{}{0.5766}{0.038}} & \ifcsvstrcmp{0.4702}{matched}{\matchedcellslash{GPT-Neo}}{\replacementpair{}{0.4702}{0.018}} & \ifcsvstrcmp{0.4466}{matched}{\matchedcellslash{TinyLLaMA}}{\replacementpair{}{0.4466}{0.012}}\\\addlinespace[0pt]
\ifcsvstrcmp{1}{1}{\addlinespace[3pt]}{}Wiki & Pythia & \matchedpair{1.0000}{1.000} & \ifcsvstrcmp{matched}{matched}{\matchedcellslash{Pythia}}{\replacementpair{}{matched}{}} & \ifcsvstrcmp{0.4048}{matched}{\matchedcellslash{Qwen}}{\replacementpair{}{0.4048}{0.006}} & \ifcsvstrcmp{0.3132}{matched}{\matchedcellslash{OpenLLaMA}}{\replacementpair{}{0.3132}{0.006}} & \ifcsvstrcmp{0.4135}{matched}{\matchedcellslash{OPT}}{\replacementpair{}{0.4135}{0.014}} & \ifcsvstrcmp{0.3410}{matched}{\matchedcellslash{GPT-Neo}}{\replacementpair{}{0.3410}{0.006}} & \ifcsvstrcmp{0.4224}{matched}{\matchedcellslash{TinyLLaMA}}{\replacementpair{}{0.4224}{0.002}}\\\addlinespace[0pt]
\ifcsvstrcmp{0}{1}{\addlinespace[3pt]}{} & Qwen & \matchedpair{0.9994}{0.984} & \ifcsvstrcmp{0.4955}{matched}{\matchedcellslash{Pythia}}{\replacementpair{}{0.4955}{0.010}} & \ifcsvstrcmp{matched}{matched}{\matchedcellslash{Qwen}}{\replacementpair{}{matched}{}} & \ifcsvstrcmp{0.2649}{matched}{\matchedcellslash{OpenLLaMA}}{\replacementpair{}{0.2649}{0.002}} & \ifcsvstrcmp{0.5724}{matched}{\matchedcellslash{OPT}}{\replacementpair{}{0.5724}{0.030}} & \ifcsvstrcmp{0.3625}{matched}{\matchedcellslash{GPT-Neo}}{\replacementpair{}{0.3625}{0.008}} & \ifcsvstrcmp{0.4321}{matched}{\matchedcellslash{TinyLLaMA}}{\replacementpair{}{0.4321}{0.004}}\\\addlinespace[0pt]
\ifcsvstrcmp{0}{1}{\addlinespace[3pt]}{} & OpenLLaMA & \matchedpair{0.9999}{0.996} & \ifcsvstrcmp{0.4134}{matched}{\matchedcellslash{Pythia}}{\replacementpair{}{0.4134}{0.008}} & \ifcsvstrcmp{0.3885}{matched}{\matchedcellslash{Qwen}}{\replacementpair{}{0.3885}{0.004}} & \ifcsvstrcmp{matched}{matched}{\matchedcellslash{OpenLLaMA}}{\replacementpair{}{matched}{}} & \ifcsvstrcmp{0.5755}{matched}{\matchedcellslash{OPT}}{\replacementpair{}{0.5755}{0.054}} & \ifcsvstrcmp{0.5505}{matched}{\matchedcellslash{GPT-Neo}}{\replacementpair{}{0.5505}{0.042}} & \ifcsvstrcmp{0.4150}{matched}{\matchedcellslash{TinyLLaMA}}{\replacementpair{}{0.4150}{0.004}}\\\addlinespace[0pt]
\ifcsvstrcmp{1}{1}{\addlinespace[3pt]}{}LFQA & Pythia & \matchedpair{1.0000}{1.000} & \ifcsvstrcmp{matched}{matched}{\matchedcellslash{Pythia}}{\replacementpair{}{matched}{}} & \ifcsvstrcmp{0.4772}{matched}{\matchedcellslash{Qwen}}{\replacementpair{}{0.4772}{0.016}} & \ifcsvstrcmp{0.5465}{matched}{\matchedcellslash{OpenLLaMA}}{\replacementpair{}{0.5465}{0.024}} & \ifcsvstrcmp{0.5427}{matched}{\matchedcellslash{OPT}}{\replacementpair{}{0.5427}{0.034}} & \ifcsvstrcmp{0.4128}{matched}{\matchedcellslash{GPT-Neo}}{\replacementpair{}{0.4128}{0.002}} & \ifcsvstrcmp{0.3994}{matched}{\matchedcellslash{TinyLLaMA}}{\replacementpair{}{0.3994}{0.004}}\\\addlinespace[0pt]
\ifcsvstrcmp{0}{1}{\addlinespace[3pt]}{} & Qwen & \matchedpair{0.9992}{0.998} & \ifcsvstrcmp{0.5296}{matched}{\matchedcellslash{Pythia}}{\replacementpair{}{0.5296}{0.025}} & \ifcsvstrcmp{matched}{matched}{\matchedcellslash{Qwen}}{\replacementpair{}{matched}{}} & \ifcsvstrcmp{0.4546}{matched}{\matchedcellslash{OpenLLaMA}}{\replacementpair{}{0.4546}{0.004}} & \ifcsvstrcmp{0.5696}{matched}{\matchedcellslash{OPT}}{\replacementpair{}{0.5696}{0.025}} & \ifcsvstrcmp{0.4360}{matched}{\matchedcellslash{GPT-Neo}}{\replacementpair{}{0.4360}{0.008}} & \ifcsvstrcmp{0.4419}{matched}{\matchedcellslash{TinyLLaMA}}{\replacementpair{}{0.4419}{0.006}}\\\addlinespace[0pt]
\ifcsvstrcmp{0}{1}{\addlinespace[3pt]}{} & OpenLLaMA & \matchedpair{1.0000}{1.000} & \ifcsvstrcmp{0.5844}{matched}{\matchedcellslash{Pythia}}{\replacementpair{}{0.5844}{0.030}} & \ifcsvstrcmp{0.5139}{matched}{\matchedcellslash{Qwen}}{\replacementpair{}{0.5139}{0.012}} & \ifcsvstrcmp{matched}{matched}{\matchedcellslash{OpenLLaMA}}{\replacementpair{}{matched}{}} & \ifcsvstrcmp{0.6097}{matched}{\matchedcellslash{OPT}}{\replacementpair{}{0.6097}{0.048}} & \ifcsvstrcmp{0.4590}{matched}{\matchedcellslash{GPT-Neo}}{\replacementpair{}{0.4590}{0.012}} & \ifcsvstrcmp{0.4025}{matched}{\matchedcellslash{TinyLLaMA}}{\replacementpair{}{0.4025}{0.008}}\\\addlinespace[0pt]
\bottomrule
\end{tabular}
}
\endgroup
\par\smallskip\begin{minipage}{\linewidth}\footnotesize
The matched reference is shown in the Native column; a diagonal stroke marks that same tokenizer, not a missing result.
\end{minipage}
\end{table}

\begin{table}[H]
\centering\small
\caption{SynthID: matched versus replacement tokenizers. Each cell shows AUROC (top) and TPR (bottom). Cells with TPR below $0.2$ are shaded pale yellow.}
\label{tab:e02-synthid-200}
\begingroup
\setlength{\tabcolsep}{3.5pt}
\renewcommand{\arraystretch}{1.0}
\maxsizebox{\linewidth}{!}{%
\begin{tabular}{@{}llc@{\hspace{8pt}}cccccc@{}}
\toprule
 & & \multicolumn{1}{c}{Matched} & \multicolumn{6}{c}{Replacement detector tokenizer} \\
\cmidrule(lr){3-3}\cmidrule(l){4-9}
Domain & Generator & Native & Pythia & Qwen & OpenLLaMA & OPT & GPT-Neo & TinyLLaMA \\
\midrule
\ifcsvstrcmp{0}{1}{\addlinespace[3pt]}{}C4 & Pythia & \matchedpair{0.9995}{0.998} & \ifcsvstrcmp{matched}{matched}{\matchedcellslash{Pythia}}{\replacementpair{}{matched}{}} & \ifcsvstrcmp{0.5134}{matched}{\matchedcellslash{Qwen}}{\replacementpair{}{0.5134}{0.002}} & \ifcsvstrcmp{0.5063}{matched}{\matchedcellslash{OpenLLaMA}}{\replacementpair{}{0.5063}{0.002}} & \ifcsvstrcmp{0.4874}{matched}{\matchedcellslash{OPT}}{\replacementpair{}{0.4874}{0.000}} & \ifcsvstrcmp{0.5054}{matched}{\matchedcellslash{GPT-Neo}}{\replacementpair{}{0.5054}{0.006}} & \ifcsvstrcmp{0.4857}{matched}{\matchedcellslash{TinyLLaMA}}{\replacementpair{}{0.4857}{0.008}}\\\addlinespace[0pt]
\ifcsvstrcmp{0}{1}{\addlinespace[3pt]}{} & Qwen & \matchedpair{1.0000}{1.000} & \ifcsvstrcmp{0.5045}{matched}{\matchedcellslash{Pythia}}{\replacementpair{}{0.5045}{0.008}} & \ifcsvstrcmp{matched}{matched}{\matchedcellslash{Qwen}}{\replacementpair{}{matched}{}} & \ifcsvstrcmp{0.5237}{matched}{\matchedcellslash{OpenLLaMA}}{\replacementpair{}{0.5237}{0.012}} & \ifcsvstrcmp{0.4874}{matched}{\matchedcellslash{OPT}}{\replacementpair{}{0.4874}{0.000}} & \ifcsvstrcmp{0.4845}{matched}{\matchedcellslash{GPT-Neo}}{\replacementpair{}{0.4845}{0.010}} & \ifcsvstrcmp{0.4814}{matched}{\matchedcellslash{TinyLLaMA}}{\replacementpair{}{0.4814}{0.004}}\\\addlinespace[0pt]
\ifcsvstrcmp{0}{1}{\addlinespace[3pt]}{} & OpenLLaMA & \matchedpair{1.0000}{1.000} & \ifcsvstrcmp{0.4914}{matched}{\matchedcellslash{Pythia}}{\replacementpair{}{0.4914}{0.010}} & \ifcsvstrcmp{0.4866}{matched}{\matchedcellslash{Qwen}}{\replacementpair{}{0.4866}{0.002}} & \ifcsvstrcmp{matched}{matched}{\matchedcellslash{OpenLLaMA}}{\replacementpair{}{matched}{}} & \ifcsvstrcmp{0.5015}{matched}{\matchedcellslash{OPT}}{\replacementpair{}{0.5015}{0.000}} & \ifcsvstrcmp{0.4813}{matched}{\matchedcellslash{GPT-Neo}}{\replacementpair{}{0.4813}{0.006}} & \ifcsvstrcmp{0.5121}{matched}{\matchedcellslash{TinyLLaMA}}{\replacementpair{}{0.5121}{0.020}}\\\addlinespace[0pt]
\ifcsvstrcmp{1}{1}{\addlinespace[3pt]}{}Wiki & Pythia & \matchedpair{1.0000}{1.000} & \ifcsvstrcmp{matched}{matched}{\matchedcellslash{Pythia}}{\replacementpair{}{matched}{}} & \ifcsvstrcmp{0.5100}{matched}{\matchedcellslash{Qwen}}{\replacementpair{}{0.5100}{0.004}} & \ifcsvstrcmp{0.4919}{matched}{\matchedcellslash{OpenLLaMA}}{\replacementpair{}{0.4919}{0.008}} & \ifcsvstrcmp{0.5013}{matched}{\matchedcellslash{OPT}}{\replacementpair{}{0.5013}{0.010}} & \ifcsvstrcmp{0.5200}{matched}{\matchedcellslash{GPT-Neo}}{\replacementpair{}{0.5200}{0.004}} & \ifcsvstrcmp{0.5113}{matched}{\matchedcellslash{TinyLLaMA}}{\replacementpair{}{0.5113}{0.004}}\\\addlinespace[0pt]
\ifcsvstrcmp{0}{1}{\addlinespace[3pt]}{} & Qwen & \matchedpair{1.0000}{1.000} & \ifcsvstrcmp{0.5103}{matched}{\matchedcellslash{Pythia}}{\replacementpair{}{0.5103}{0.006}} & \ifcsvstrcmp{matched}{matched}{\matchedcellslash{Qwen}}{\replacementpair{}{matched}{}} & \ifcsvstrcmp{0.5016}{matched}{\matchedcellslash{OpenLLaMA}}{\replacementpair{}{0.5016}{0.008}} & \ifcsvstrcmp{0.4846}{matched}{\matchedcellslash{OPT}}{\replacementpair{}{0.4846}{0.006}} & \ifcsvstrcmp{0.5174}{matched}{\matchedcellslash{GPT-Neo}}{\replacementpair{}{0.5174}{0.008}} & \ifcsvstrcmp{0.5201}{matched}{\matchedcellslash{TinyLLaMA}}{\replacementpair{}{0.5201}{0.006}}\\\addlinespace[0pt]
\ifcsvstrcmp{0}{1}{\addlinespace[3pt]}{} & OpenLLaMA & \matchedpair{1.0000}{1.000} & \ifcsvstrcmp{0.5147}{matched}{\matchedcellslash{Pythia}}{\replacementpair{}{0.5147}{0.004}} & \ifcsvstrcmp{0.5035}{matched}{\matchedcellslash{Qwen}}{\replacementpair{}{0.5035}{0.000}} & \ifcsvstrcmp{matched}{matched}{\matchedcellslash{OpenLLaMA}}{\replacementpair{}{matched}{}} & \ifcsvstrcmp{0.4971}{matched}{\matchedcellslash{OPT}}{\replacementpair{}{0.4971}{0.006}} & \ifcsvstrcmp{0.4960}{matched}{\matchedcellslash{GPT-Neo}}{\replacementpair{}{0.4960}{0.000}} & \ifcsvstrcmp{0.5079}{matched}{\matchedcellslash{TinyLLaMA}}{\replacementpair{}{0.5079}{0.010}}\\\addlinespace[0pt]
\ifcsvstrcmp{1}{1}{\addlinespace[3pt]}{}LFQA & Pythia & \matchedpair{1.0000}{1.000} & \ifcsvstrcmp{matched}{matched}{\matchedcellslash{Pythia}}{\replacementpair{}{matched}{}} & \ifcsvstrcmp{0.4869}{matched}{\matchedcellslash{Qwen}}{\replacementpair{}{0.4869}{0.008}} & \ifcsvstrcmp{0.5215}{matched}{\matchedcellslash{OpenLLaMA}}{\replacementpair{}{0.5215}{0.004}} & \ifcsvstrcmp{0.4985}{matched}{\matchedcellslash{OPT}}{\replacementpair{}{0.4985}{0.006}} & \ifcsvstrcmp{0.4529}{matched}{\matchedcellslash{GPT-Neo}}{\replacementpair{}{0.4529}{0.000}} & \ifcsvstrcmp{0.4994}{matched}{\matchedcellslash{TinyLLaMA}}{\replacementpair{}{0.4994}{0.000}}\\\addlinespace[0pt]
\ifcsvstrcmp{0}{1}{\addlinespace[3pt]}{} & Qwen & \matchedpair{1.0000}{1.000} & \ifcsvstrcmp{0.4706}{matched}{\matchedcellslash{Pythia}}{\replacementpair{}{0.4706}{0.004}} & \ifcsvstrcmp{matched}{matched}{\matchedcellslash{Qwen}}{\replacementpair{}{matched}{}} & \ifcsvstrcmp{0.4820}{matched}{\matchedcellslash{OpenLLaMA}}{\replacementpair{}{0.4820}{0.006}} & \ifcsvstrcmp{0.4960}{matched}{\matchedcellslash{OPT}}{\replacementpair{}{0.4960}{0.010}} & \ifcsvstrcmp{0.4618}{matched}{\matchedcellslash{GPT-Neo}}{\replacementpair{}{0.4618}{0.006}} & \ifcsvstrcmp{0.5208}{matched}{\matchedcellslash{TinyLLaMA}}{\replacementpair{}{0.5208}{0.002}}\\\addlinespace[0pt]
\ifcsvstrcmp{0}{1}{\addlinespace[3pt]}{} & OpenLLaMA & \matchedpair{1.0000}{1.000} & \ifcsvstrcmp{0.5077}{matched}{\matchedcellslash{Pythia}}{\replacementpair{}{0.5077}{0.018}} & \ifcsvstrcmp{0.4875}{matched}{\matchedcellslash{Qwen}}{\replacementpair{}{0.4875}{0.002}} & \ifcsvstrcmp{matched}{matched}{\matchedcellslash{OpenLLaMA}}{\replacementpair{}{matched}{}} & \ifcsvstrcmp{0.4925}{matched}{\matchedcellslash{OPT}}{\replacementpair{}{0.4925}{0.020}} & \ifcsvstrcmp{0.4679}{matched}{\matchedcellslash{GPT-Neo}}{\replacementpair{}{0.4679}{0.006}} & \ifcsvstrcmp{0.4948}{matched}{\matchedcellslash{TinyLLaMA}}{\replacementpair{}{0.4948}{0.000}}\\\addlinespace[0pt]
\bottomrule
\end{tabular}
}
\endgroup
\par\smallskip\begin{minipage}{\linewidth}\footnotesize
The matched reference is shown in the Native column; a diagonal stroke marks that same tokenizer, not a missing result.
\end{minipage}
\end{table}

\begin{table}[H]
\centering\small
\caption{SIR: matched versus replacement tokenizers. Each cell shows AUROC (top) and TPR (bottom). Cells with TPR below $0.2$ are shaded pale yellow.}
\label{tab:e02-sir-model-matched-c4}
\label{tab:e02-sir-model-matched-wiki}
\label{tab:e02-sir-model-matched-lfqa}
\begingroup
\setlength{\tabcolsep}{3.5pt}
\renewcommand{\arraystretch}{1.0}
\maxsizebox{\linewidth}{!}{%
\begin{tabular}{@{}llc@{\hspace{8pt}}cccccc@{}}
\toprule
 & & \multicolumn{1}{c}{Matched} & \multicolumn{6}{c}{Replacement detector tokenizer} \\
\cmidrule(lr){3-3}\cmidrule(l){4-9}
Domain & Generator & Native & Pythia & Qwen & OpenLLaMA & OPT & GPT-Neo & TinyLLaMA \\
\midrule
\ifcsvstrcmp{0}{1}{\addlinespace[3pt]}{}C4 & Pythia & \matchedpair{0.9993}{0.984} & \ifcsvstrcmp{matched}{matched}{\matchedcellslash{Pythia}}{\replacementpair{}{matched}{}} & \ifcsvstrcmp{0.4772}{matched}{\matchedcellslash{Qwen}}{\replacementpair{}{0.4772}{0.014}} & \ifcsvstrcmp{0.4222}{matched}{\matchedcellslash{OpenLLaMA}}{\replacementpair{}{0.4222}{0.004}} & \ifcsvstrcmp{0.5239}{matched}{\matchedcellslash{OPT}}{\replacementpair{}{0.5239}{0.018}} & \ifcsvstrcmp{0.5181}{matched}{\matchedcellslash{GPT-Neo}}{\replacementpair{}{0.5181}{0.024}} & \ifcsvstrcmp{0.6779}{matched}{\matchedcellslash{TinyLLaMA}}{\replacementpair{}{0.6779}{0.050}}\\\addlinespace[0pt]
\ifcsvstrcmp{0}{1}{\addlinespace[3pt]}{} & Qwen & \matchedpair{0.9900}{0.944} & \ifcsvstrcmp{0.4178}{matched}{\matchedcellslash{Pythia}}{\replacementpair{}{0.4178}{0.004}} & \ifcsvstrcmp{matched}{matched}{\matchedcellslash{Qwen}}{\replacementpair{}{matched}{}} & \ifcsvstrcmp{0.4782}{matched}{\matchedcellslash{OpenLLaMA}}{\replacementpair{}{0.4782}{0.004}} & \ifcsvstrcmp{0.3649}{matched}{\matchedcellslash{OPT}}{\replacementpair{}{0.3649}{0.002}} & \ifcsvstrcmp{0.5204}{matched}{\matchedcellslash{GPT-Neo}}{\replacementpair{}{0.5204}{0.014}} & \ifcsvstrcmp{0.4417}{matched}{\matchedcellslash{TinyLLaMA}}{\replacementpair{}{0.4417}{0.000}}\\\addlinespace[0pt]
\ifcsvstrcmp{0}{1}{\addlinespace[3pt]}{} & OpenLLaMA & \matchedpair{0.9991}{0.986} & \ifcsvstrcmp{0.3171}{matched}{\matchedcellslash{Pythia}}{\replacementpair{}{0.3171}{0.006}} & \ifcsvstrcmp{0.4777}{matched}{\matchedcellslash{Qwen}}{\replacementpair{}{0.4777}{0.042}} & \ifcsvstrcmp{matched}{matched}{\matchedcellslash{OpenLLaMA}}{\replacementpair{}{matched}{}} & \ifcsvstrcmp{0.5013}{matched}{\matchedcellslash{OPT}}{\replacementpair{}{0.5013}{0.008}} & \ifcsvstrcmp{0.5469}{matched}{\matchedcellslash{GPT-Neo}}{\replacementpair{}{0.5469}{0.042}} & \ifcsvstrcmp{0.4942}{matched}{\matchedcellslash{TinyLLaMA}}{\replacementpair{}{0.4942}{0.018}}\\\addlinespace[0pt]
\ifcsvstrcmp{1}{1}{\addlinespace[3pt]}{}Wiki & Pythia & \matchedpair{0.9991}{0.994} & \ifcsvstrcmp{matched}{matched}{\matchedcellslash{Pythia}}{\replacementpair{}{matched}{}} & \ifcsvstrcmp{0.5871}{matched}{\matchedcellslash{Qwen}}{\replacementpair{}{0.5871}{0.058}} & \ifcsvstrcmp{0.4288}{matched}{\matchedcellslash{OpenLLaMA}}{\replacementpair{}{0.4288}{0.006}} & \ifcsvstrcmp{0.4766}{matched}{\matchedcellslash{OPT}}{\replacementpair{}{0.4766}{0.004}} & \ifcsvstrcmp{0.7159}{matched}{\matchedcellslash{GPT-Neo}}{\replacementpair{}{0.7159}{0.104}} & \ifcsvstrcmp{0.8081}{matched}{\matchedcellslash{TinyLLaMA}}{\replacementpair{}{0.8081}{0.086}}\\\addlinespace[0pt]
\ifcsvstrcmp{0}{1}{\addlinespace[3pt]}{} & Qwen & \matchedpair{0.9887}{0.914} & \ifcsvstrcmp{0.4894}{matched}{\matchedcellslash{Pythia}}{\replacementpair{}{0.4894}{0.018}} & \ifcsvstrcmp{matched}{matched}{\matchedcellslash{Qwen}}{\replacementpair{}{matched}{}} & \ifcsvstrcmp{0.6036}{matched}{\matchedcellslash{OpenLLaMA}}{\replacementpair{}{0.6036}{0.012}} & \ifcsvstrcmp{0.3878}{matched}{\matchedcellslash{OPT}}{\replacementpair{}{0.3878}{0.006}} & \ifcsvstrcmp{0.7836}{matched}{\matchedcellslash{GPT-Neo}}{\replacementpair{}{0.7836}{0.144}} & \ifcsvstrcmp{0.6972}{matched}{\matchedcellslash{TinyLLaMA}}{\replacementpair{}{0.6972}{0.028}}\\\addlinespace[0pt]
\ifcsvstrcmp{0}{1}{\addlinespace[3pt]}{} & OpenLLaMA & \matchedpair{0.9991}{0.982} & \ifcsvstrcmp{0.4727}{matched}{\matchedcellslash{Pythia}}{\replacementpair{}{0.4727}{0.056}} & \ifcsvstrcmp{0.6503}{matched}{\matchedcellslash{Qwen}}{\replacementpair{}{0.6503}{0.120}} & \ifcsvstrcmp{matched}{matched}{\matchedcellslash{OpenLLaMA}}{\replacementpair{}{matched}{}} & \ifcsvstrcmp{0.5981}{matched}{\matchedcellslash{OPT}}{\replacementpair{}{0.5981}{0.040}} & \ifcsvstrcmp{0.7228}{matched}{\matchedcellslash{GPT-Neo}}{\replacementpair{}{0.7228}{0.106}} & \ifcsvstrcmp{0.7492}{matched}{\matchedcellslash{TinyLLaMA}}{\replacementpair{}{0.7492}{0.090}}\\\addlinespace[0pt]
\ifcsvstrcmp{1}{1}{\addlinespace[3pt]}{}LFQA & Pythia & \matchedpair{0.9987}{0.986} & \ifcsvstrcmp{matched}{matched}{\matchedcellslash{Pythia}}{\replacementpair{}{matched}{}} & \ifcsvstrcmp{0.4372}{matched}{\matchedcellslash{Qwen}}{\replacementpair{}{0.4372}{0.014}} & \ifcsvstrcmp{0.4291}{matched}{\matchedcellslash{OpenLLaMA}}{\replacementpair{}{0.4291}{0.006}} & \ifcsvstrcmp{0.5706}{matched}{\matchedcellslash{OPT}}{\replacementpair{}{0.5706}{0.024}} & \ifcsvstrcmp{0.5803}{matched}{\matchedcellslash{GPT-Neo}}{\replacementpair{}{0.5803}{0.030}} & \ifcsvstrcmp{0.6609}{matched}{\matchedcellslash{TinyLLaMA}}{\replacementpair{}{0.6609}{0.038}}\\\addlinespace[0pt]
\ifcsvstrcmp{0}{1}{\addlinespace[3pt]}{} & Qwen & \matchedpair{0.9863}{0.906} & \ifcsvstrcmp{0.4568}{matched}{\matchedcellslash{Pythia}}{\replacementpair{}{0.4568}{0.006}} & \ifcsvstrcmp{matched}{matched}{\matchedcellslash{Qwen}}{\replacementpair{}{matched}{}} & \ifcsvstrcmp{0.4320}{matched}{\matchedcellslash{OpenLLaMA}}{\replacementpair{}{0.4320}{0.004}} & \ifcsvstrcmp{0.3886}{matched}{\matchedcellslash{OPT}}{\replacementpair{}{0.3886}{0.004}} & \ifcsvstrcmp{0.5924}{matched}{\matchedcellslash{GPT-Neo}}{\replacementpair{}{0.5924}{0.036}} & \ifcsvstrcmp{0.5196}{matched}{\matchedcellslash{TinyLLaMA}}{\replacementpair{}{0.5196}{0.006}}\\\addlinespace[0pt]
\ifcsvstrcmp{0}{1}{\addlinespace[3pt]}{} & OpenLLaMA & \matchedpair{0.9995}{0.988} & \ifcsvstrcmp{0.3126}{matched}{\matchedcellslash{Pythia}}{\replacementpair{}{0.3126}{0.006}} & \ifcsvstrcmp{0.4773}{matched}{\matchedcellslash{Qwen}}{\replacementpair{}{0.4773}{0.032}} & \ifcsvstrcmp{matched}{matched}{\matchedcellslash{OpenLLaMA}}{\replacementpair{}{matched}{}} & \ifcsvstrcmp{0.5401}{matched}{\matchedcellslash{OPT}}{\replacementpair{}{0.5401}{0.016}} & \ifcsvstrcmp{0.6114}{matched}{\matchedcellslash{GPT-Neo}}{\replacementpair{}{0.6114}{0.074}} & \ifcsvstrcmp{0.5138}{matched}{\matchedcellslash{TinyLLaMA}}{\replacementpair{}{0.5138}{0.028}}\\\addlinespace[0pt]
\bottomrule
\end{tabular}
}
\endgroup
\par\smallskip\begin{minipage}{\linewidth}\footnotesize
The matched reference is shown in the Native column; a diagonal stroke marks that same tokenizer, not a missing result.
\end{minipage}
\end{table}

\begin{table}[H]
\centering\small
\caption{EXP: matched versus replacement tokenizers. Each cell shows AUROC (top) and TPR (bottom). Cells with TPR below $0.2$ are shaded pale yellow.}
\label{tab:e02-exp}
\begingroup
\setlength{\tabcolsep}{3.5pt}
\renewcommand{\arraystretch}{1.0}
\maxsizebox{\linewidth}{!}{%
\begin{tabular}{@{}llc@{\hspace{8pt}}cccccc@{}}
\toprule
 & & \multicolumn{1}{c}{Matched} & \multicolumn{6}{c}{Replacement detector tokenizer} \\
\cmidrule(lr){3-3}\cmidrule(l){4-9}
Domain & Generator & Native & Pythia & Qwen & OpenLLaMA & OPT & GPT-Neo & TinyLLaMA \\
\midrule
\ifcsvstrcmp{0}{1}{\addlinespace[3pt]}{}C4 & Pythia & \ifcsvstrcmp{0.9982}{--}{--}{\matchedpair{0.9982}{0.996}} & \ifcsvstrcmp{matched}{matched}{\matchedcellslash{Pythia}}{\ifcsvstrcmp{matched}{--}{--}{\replacementpair{}{matched}{}}} & \ifcsvstrcmp{0.5282}{matched}{\matchedcellslash{Qwen}}{\ifcsvstrcmp{0.5282}{--}{--}{\replacementpair{}{0.5282}{0.006}}} & \ifcsvstrcmp{0.5313}{matched}{\matchedcellslash{OpenLLaMA}}{\ifcsvstrcmp{0.5313}{--}{--}{\replacementpair{}{0.5313}{0.026}}} & \ifcsvstrcmp{0.5131}{matched}{\matchedcellslash{OPT}}{\ifcsvstrcmp{0.5131}{--}{--}{\replacementpair{}{0.5131}{0.008}}} & \ifcsvstrcmp{0.5280}{matched}{\matchedcellslash{GPT-Neo}}{\ifcsvstrcmp{0.5280}{--}{--}{\replacementpair{}{0.5280}{0.006}}} & \ifcsvstrcmp{0.5180}{matched}{\matchedcellslash{TinyLLaMA}}{\ifcsvstrcmp{0.5180}{--}{--}{\replacementpair{}{0.5180}{0.010}}}\\\addlinespace[0pt]
\ifcsvstrcmp{0}{1}{\addlinespace[3pt]}{} & Qwen & \ifcsvstrcmp{1.0000}{--}{--}{\matchedpair{1.0000}{1.000}} & \ifcsvstrcmp{0.4874}{matched}{\matchedcellslash{Pythia}}{\ifcsvstrcmp{0.4874}{--}{--}{\replacementpair{}{0.4874}{0.004}}} & \ifcsvstrcmp{matched}{matched}{\matchedcellslash{Qwen}}{\ifcsvstrcmp{matched}{--}{--}{\replacementpair{}{matched}{}}} & \ifcsvstrcmp{0.5620}{matched}{\matchedcellslash{OpenLLaMA}}{\ifcsvstrcmp{0.5620}{--}{--}{\replacementpair{}{0.5620}{0.054}}} & \ifcsvstrcmp{0.4876}{matched}{\matchedcellslash{OPT}}{\ifcsvstrcmp{0.4876}{--}{--}{\replacementpair{}{0.4876}{0.014}}} & \ifcsvstrcmp{0.4926}{matched}{\matchedcellslash{GPT-Neo}}{\ifcsvstrcmp{0.4926}{--}{--}{\replacementpair{}{0.4926}{0.010}}} & \ifcsvstrcmp{0.5202}{matched}{\matchedcellslash{TinyLLaMA}}{\ifcsvstrcmp{0.5202}{--}{--}{\replacementpair{}{0.5202}{0.020}}}\\\addlinespace[0pt]
\ifcsvstrcmp{0}{1}{\addlinespace[3pt]}{} & OpenLLaMA & \ifcsvstrcmp{0.9990}{--}{--}{\matchedpair{0.9990}{0.996}} & \ifcsvstrcmp{0.4713}{matched}{\matchedcellslash{Pythia}}{\ifcsvstrcmp{0.4713}{--}{--}{\replacementpair{}{0.4713}{0.014}}} & \ifcsvstrcmp{0.4928}{matched}{\matchedcellslash{Qwen}}{\ifcsvstrcmp{0.4928}{--}{--}{\replacementpair{}{0.4928}{0.014}}} & \ifcsvstrcmp{matched}{matched}{\matchedcellslash{OpenLLaMA}}{\ifcsvstrcmp{matched}{--}{--}{\replacementpair{}{matched}{}}} & \ifcsvstrcmp{0.4923}{matched}{\matchedcellslash{OPT}}{\ifcsvstrcmp{0.4923}{--}{--}{\replacementpair{}{0.4923}{0.014}}} & \ifcsvstrcmp{0.5125}{matched}{\matchedcellslash{GPT-Neo}}{\ifcsvstrcmp{0.5125}{--}{--}{\replacementpair{}{0.5125}{0.016}}} & \ifcsvstrcmp{0.5157}{matched}{\matchedcellslash{TinyLLaMA}}{\ifcsvstrcmp{0.5157}{--}{--}{\replacementpair{}{0.5157}{0.016}}}\\\addlinespace[0pt]
\ifcsvstrcmp{1}{1}{\addlinespace[3pt]}{}Wiki & Pythia & \ifcsvstrcmp{1.0000}{--}{--}{\matchedpair{1.0000}{1.000}} & \ifcsvstrcmp{matched}{matched}{\matchedcellslash{Pythia}}{\ifcsvstrcmp{matched}{--}{--}{\replacementpair{}{matched}{}}} & \ifcsvstrcmp{0.5452}{matched}{\matchedcellslash{Qwen}}{\ifcsvstrcmp{0.5452}{--}{--}{\replacementpair{}{0.5452}{0.006}}} & \ifcsvstrcmp{0.5075}{matched}{\matchedcellslash{OpenLLaMA}}{\ifcsvstrcmp{0.5075}{--}{--}{\replacementpair{}{0.5075}{0.018}}} & \ifcsvstrcmp{0.4644}{matched}{\matchedcellslash{OPT}}{\ifcsvstrcmp{0.4644}{--}{--}{\replacementpair{}{0.4644}{0.004}}} & \ifcsvstrcmp{0.4763}{matched}{\matchedcellslash{GPT-Neo}}{\ifcsvstrcmp{0.4763}{--}{--}{\replacementpair{}{0.4763}{0.010}}} & \ifcsvstrcmp{0.4979}{matched}{\matchedcellslash{TinyLLaMA}}{\ifcsvstrcmp{0.4979}{--}{--}{\replacementpair{}{0.4979}{0.016}}}\\\addlinespace[0pt]
\ifcsvstrcmp{0}{1}{\addlinespace[3pt]}{} & Qwen & \ifcsvstrcmp{1.0000}{--}{--}{\matchedpair{1.0000}{1.000}} & \ifcsvstrcmp{0.5067}{matched}{\matchedcellslash{Pythia}}{\ifcsvstrcmp{0.5067}{--}{--}{\replacementpair{}{0.5067}{0.020}}} & \ifcsvstrcmp{matched}{matched}{\matchedcellslash{Qwen}}{\ifcsvstrcmp{matched}{--}{--}{\replacementpair{}{matched}{}}} & \ifcsvstrcmp{0.4888}{matched}{\matchedcellslash{OpenLLaMA}}{\ifcsvstrcmp{0.4888}{--}{--}{\replacementpair{}{0.4888}{0.018}}} & \ifcsvstrcmp{0.4471}{matched}{\matchedcellslash{OPT}}{\ifcsvstrcmp{0.4471}{--}{--}{\replacementpair{}{0.4471}{0.004}}} & \ifcsvstrcmp{0.5104}{matched}{\matchedcellslash{GPT-Neo}}{\ifcsvstrcmp{0.5104}{--}{--}{\replacementpair{}{0.5104}{0.018}}} & \ifcsvstrcmp{0.4714}{matched}{\matchedcellslash{TinyLLaMA}}{\ifcsvstrcmp{0.4714}{--}{--}{\replacementpair{}{0.4714}{0.008}}}\\\addlinespace[0pt]
\ifcsvstrcmp{0}{1}{\addlinespace[3pt]}{} & OpenLLaMA & \ifcsvstrcmp{1.0000}{--}{--}{\matchedpair{1.0000}{1.000}} & \ifcsvstrcmp{0.5532}{matched}{\matchedcellslash{Pythia}}{\ifcsvstrcmp{0.5532}{--}{--}{\replacementpair{}{0.5532}{0.012}}} & \ifcsvstrcmp{0.5589}{matched}{\matchedcellslash{Qwen}}{\ifcsvstrcmp{0.5589}{--}{--}{\replacementpair{}{0.5589}{0.010}}} & \ifcsvstrcmp{matched}{matched}{\matchedcellslash{OpenLLaMA}}{\ifcsvstrcmp{matched}{--}{--}{\replacementpair{}{matched}{}}} & \ifcsvstrcmp{0.4982}{matched}{\matchedcellslash{OPT}}{\ifcsvstrcmp{0.4982}{--}{--}{\replacementpair{}{0.4982}{0.012}}} & \ifcsvstrcmp{0.4800}{matched}{\matchedcellslash{GPT-Neo}}{\ifcsvstrcmp{0.4800}{--}{--}{\replacementpair{}{0.4800}{0.008}}} & \ifcsvstrcmp{0.4611}{matched}{\matchedcellslash{TinyLLaMA}}{\ifcsvstrcmp{0.4611}{--}{--}{\replacementpair{}{0.4611}{0.016}}}\\\addlinespace[0pt]
\ifcsvstrcmp{1}{1}{\addlinespace[3pt]}{}LFQA & Pythia & \ifcsvstrcmp{0.9990}{--}{--}{\matchedpair{0.9990}{1.000}} & \ifcsvstrcmp{matched}{matched}{\matchedcellslash{Pythia}}{\ifcsvstrcmp{matched}{--}{--}{\replacementpair{}{matched}{}}} & \ifcsvstrcmp{0.5139}{matched}{\matchedcellslash{Qwen}}{\ifcsvstrcmp{0.5139}{--}{--}{\replacementpair{}{0.5139}{0.018}}} & \ifcsvstrcmp{0.5202}{matched}{\matchedcellslash{OpenLLaMA}}{\ifcsvstrcmp{0.5202}{--}{--}{\replacementpair{}{0.5202}{0.016}}} & \ifcsvstrcmp{0.5184}{matched}{\matchedcellslash{OPT}}{\ifcsvstrcmp{0.5184}{--}{--}{\replacementpair{}{0.5184}{0.004}}} & \ifcsvstrcmp{0.4909}{matched}{\matchedcellslash{GPT-Neo}}{\ifcsvstrcmp{0.4909}{--}{--}{\replacementpair{}{0.4909}{0.010}}} & \ifcsvstrcmp{0.5187}{matched}{\matchedcellslash{TinyLLaMA}}{\ifcsvstrcmp{0.5187}{--}{--}{\replacementpair{}{0.5187}{0.020}}}\\\addlinespace[0pt]
\ifcsvstrcmp{0}{1}{\addlinespace[3pt]}{} & Qwen & \ifcsvstrcmp{0.9998}{--}{--}{\matchedpair{0.9998}{0.998}} & \ifcsvstrcmp{0.5159}{matched}{\matchedcellslash{Pythia}}{\ifcsvstrcmp{0.5159}{--}{--}{\replacementpair{}{0.5159}{0.010}}} & \ifcsvstrcmp{matched}{matched}{\matchedcellslash{Qwen}}{\ifcsvstrcmp{matched}{--}{--}{\replacementpair{}{matched}{}}} & \ifcsvstrcmp{0.4870}{matched}{\matchedcellslash{OpenLLaMA}}{\ifcsvstrcmp{0.4870}{--}{--}{\replacementpair{}{0.4870}{0.012}}} & \ifcsvstrcmp{0.4885}{matched}{\matchedcellslash{OPT}}{\ifcsvstrcmp{0.4885}{--}{--}{\replacementpair{}{0.4885}{0.004}}} & \ifcsvstrcmp{0.4711}{matched}{\matchedcellslash{GPT-Neo}}{\ifcsvstrcmp{0.4711}{--}{--}{\replacementpair{}{0.4711}{0.006}}} & \ifcsvstrcmp{0.4896}{matched}{\matchedcellslash{TinyLLaMA}}{\ifcsvstrcmp{0.4896}{--}{--}{\replacementpair{}{0.4896}{0.016}}}\\\addlinespace[0pt]
\ifcsvstrcmp{0}{1}{\addlinespace[3pt]}{} & OpenLLaMA & \ifcsvstrcmp{0.9995}{--}{--}{\matchedpair{0.9995}{1.000}} & \ifcsvstrcmp{0.5016}{matched}{\matchedcellslash{Pythia}}{\ifcsvstrcmp{0.5016}{--}{--}{\replacementpair{}{0.5016}{0.018}}} & \ifcsvstrcmp{0.4888}{matched}{\matchedcellslash{Qwen}}{\ifcsvstrcmp{0.4888}{--}{--}{\replacementpair{}{0.4888}{0.026}}} & \ifcsvstrcmp{matched}{matched}{\matchedcellslash{OpenLLaMA}}{\ifcsvstrcmp{matched}{--}{--}{\replacementpair{}{matched}{}}} & \ifcsvstrcmp{0.5129}{matched}{\matchedcellslash{OPT}}{\ifcsvstrcmp{0.5129}{--}{--}{\replacementpair{}{0.5129}{0.010}}} & \ifcsvstrcmp{0.4666}{matched}{\matchedcellslash{GPT-Neo}}{\ifcsvstrcmp{0.4666}{--}{--}{\replacementpair{}{0.4666}{0.004}}} & \ifcsvstrcmp{0.4935}{matched}{\matchedcellslash{TinyLLaMA}}{\ifcsvstrcmp{0.4935}{--}{--}{\replacementpair{}{0.4935}{0.022}}}\\\addlinespace[0pt]
\bottomrule
\end{tabular}
}
\endgroup
\par\smallskip\begin{minipage}{\linewidth}\footnotesize
The matched reference is shown in the Native column; a diagonal stroke marks that same tokenizer, not a missing result.
\end{minipage}
\end{table}

\begin{table}[H]
\centering\small
\caption{ITS: matched versus replacement tokenizers. Each cell shows AUROC (top) and TPR (bottom). Cells with TPR below $0.2$ are shaded pale yellow.}
\label{tab:e02-its}
\begingroup
\setlength{\tabcolsep}{3.5pt}
\renewcommand{\arraystretch}{1.0}
\maxsizebox{\linewidth}{!}{%
\begin{tabular}{@{}llc@{\hspace{8pt}}cccccc@{}}
\toprule
 & & \multicolumn{1}{c}{Matched} & \multicolumn{6}{c}{Replacement detector tokenizer} \\
\cmidrule(lr){3-3}\cmidrule(l){4-9}
Domain & Generator & Native & Pythia & Qwen & OpenLLaMA & OPT & GPT-Neo & TinyLLaMA \\
\midrule
\ifcsvstrcmp{0}{1}{\addlinespace[3pt]}{}C4 & Pythia & \ifcsvstrcmp{0.9911}{--}{--}{\matchedpair{0.9911}{0.980}} & \ifcsvstrcmp{matched}{matched}{\matchedcellslash{Pythia}}{\ifcsvstrcmp{matched}{--}{--}{\replacementpair{}{matched}{}}} & \ifcsvstrcmp{0.4878}{matched}{\matchedcellslash{Qwen}}{\ifcsvstrcmp{0.4878}{--}{--}{\replacementpair{}{0.4878}{0.006}}} & \ifcsvstrcmp{0.4877}{matched}{\matchedcellslash{OpenLLaMA}}{\ifcsvstrcmp{0.4877}{--}{--}{\replacementpair{}{0.4877}{0.006}}} & \ifcsvstrcmp{0.4936}{matched}{\matchedcellslash{OPT}}{\ifcsvstrcmp{0.4936}{--}{--}{\replacementpair{}{0.4936}{0.002}}} & \ifcsvstrcmp{0.4394}{matched}{\matchedcellslash{GPT-Neo}}{\ifcsvstrcmp{0.4394}{--}{--}{\replacementpair{}{0.4394}{0.010}}} & \ifcsvstrcmp{0.5230}{matched}{\matchedcellslash{TinyLLaMA}}{\ifcsvstrcmp{0.5230}{--}{--}{\replacementpair{}{0.5230}{0.004}}}\\\addlinespace[0pt]
\ifcsvstrcmp{0}{1}{\addlinespace[3pt]}{} & Qwen & \ifcsvstrcmp{0.9994}{--}{--}{\matchedpair{0.9994}{0.992}} & \ifcsvstrcmp{0.4831}{matched}{\matchedcellslash{Pythia}}{\ifcsvstrcmp{0.4831}{--}{--}{\replacementpair{}{0.4831}{0.010}}} & \ifcsvstrcmp{matched}{matched}{\matchedcellslash{Qwen}}{\ifcsvstrcmp{matched}{--}{--}{\replacementpair{}{matched}{}}} & \ifcsvstrcmp{0.4997}{matched}{\matchedcellslash{OpenLLaMA}}{\ifcsvstrcmp{0.4997}{--}{--}{\replacementpair{}{0.4997}{0.016}}} & \ifcsvstrcmp{0.4874}{matched}{\matchedcellslash{OPT}}{\ifcsvstrcmp{0.4874}{--}{--}{\replacementpair{}{0.4874}{0.004}}} & \ifcsvstrcmp{0.4852}{matched}{\matchedcellslash{GPT-Neo}}{\ifcsvstrcmp{0.4852}{--}{--}{\replacementpair{}{0.4852}{0.004}}} & \ifcsvstrcmp{0.4812}{matched}{\matchedcellslash{TinyLLaMA}}{\ifcsvstrcmp{0.4812}{--}{--}{\replacementpair{}{0.4812}{0.004}}}\\\addlinespace[0pt]
\ifcsvstrcmp{0}{1}{\addlinespace[3pt]}{} & OpenLLaMA & \ifcsvstrcmp{0.9944}{--}{--}{\matchedpair{0.9944}{0.980}} & \ifcsvstrcmp{0.4988}{matched}{\matchedcellslash{Pythia}}{\ifcsvstrcmp{0.4988}{--}{--}{\replacementpair{}{0.4988}{0.016}}} & \ifcsvstrcmp{0.5271}{matched}{\matchedcellslash{Qwen}}{\ifcsvstrcmp{0.5271}{--}{--}{\replacementpair{}{0.5271}{0.008}}} & \ifcsvstrcmp{matched}{matched}{\matchedcellslash{OpenLLaMA}}{\ifcsvstrcmp{matched}{--}{--}{\replacementpair{}{matched}{}}} & \ifcsvstrcmp{0.4894}{matched}{\matchedcellslash{OPT}}{\ifcsvstrcmp{0.4894}{--}{--}{\replacementpair{}{0.4894}{0.008}}} & \ifcsvstrcmp{0.4400}{matched}{\matchedcellslash{GPT-Neo}}{\ifcsvstrcmp{0.4400}{--}{--}{\replacementpair{}{0.4400}{0.008}}} & \ifcsvstrcmp{0.5652}{matched}{\matchedcellslash{TinyLLaMA}}{\ifcsvstrcmp{0.5652}{--}{--}{\replacementpair{}{0.5652}{0.016}}}\\\addlinespace[0pt]
\ifcsvstrcmp{1}{1}{\addlinespace[3pt]}{}Wiki & Pythia & \ifcsvstrcmp{0.9994}{--}{--}{\matchedpair{0.9994}{0.994}} & \ifcsvstrcmp{matched}{matched}{\matchedcellslash{Pythia}}{\ifcsvstrcmp{matched}{--}{--}{\replacementpair{}{matched}{}}} & \ifcsvstrcmp{0.4388}{matched}{\matchedcellslash{Qwen}}{\ifcsvstrcmp{0.4388}{--}{--}{\replacementpair{}{0.4388}{0.008}}} & \ifcsvstrcmp{0.3487}{matched}{\matchedcellslash{OpenLLaMA}}{\ifcsvstrcmp{0.3487}{--}{--}{\replacementpair{}{0.3487}{0.002}}} & \ifcsvstrcmp{0.4777}{matched}{\matchedcellslash{OPT}}{\ifcsvstrcmp{0.4777}{--}{--}{\replacementpair{}{0.4777}{0.004}}} & \ifcsvstrcmp{0.4854}{matched}{\matchedcellslash{GPT-Neo}}{\ifcsvstrcmp{0.4854}{--}{--}{\replacementpair{}{0.4854}{0.012}}} & \ifcsvstrcmp{0.2893}{matched}{\matchedcellslash{TinyLLaMA}}{\ifcsvstrcmp{0.2893}{--}{--}{\replacementpair{}{0.2893}{0.000}}}\\\addlinespace[0pt]
\ifcsvstrcmp{0}{1}{\addlinespace[3pt]}{} & Qwen & \ifcsvstrcmp{0.9877}{--}{--}{\matchedpair{0.9877}{0.928}} & \ifcsvstrcmp{0.5881}{matched}{\matchedcellslash{Pythia}}{\ifcsvstrcmp{0.5881}{--}{--}{\replacementpair{}{0.5881}{0.012}}} & \ifcsvstrcmp{matched}{matched}{\matchedcellslash{Qwen}}{\ifcsvstrcmp{matched}{--}{--}{\replacementpair{}{matched}{}}} & \ifcsvstrcmp{0.3527}{matched}{\matchedcellslash{OpenLLaMA}}{\ifcsvstrcmp{0.3527}{--}{--}{\replacementpair{}{0.3527}{0.002}}} & \ifcsvstrcmp{0.4898}{matched}{\matchedcellslash{OPT}}{\ifcsvstrcmp{0.4898}{--}{--}{\replacementpair{}{0.4898}{0.006}}} & \ifcsvstrcmp{0.3925}{matched}{\matchedcellslash{GPT-Neo}}{\ifcsvstrcmp{0.3925}{--}{--}{\replacementpair{}{0.3925}{0.004}}} & \ifcsvstrcmp{0.3406}{matched}{\matchedcellslash{TinyLLaMA}}{\ifcsvstrcmp{0.3406}{--}{--}{\replacementpair{}{0.3406}{0.000}}}\\\addlinespace[0pt]
\ifcsvstrcmp{0}{1}{\addlinespace[3pt]}{} & OpenLLaMA & \ifcsvstrcmp{0.9953}{--}{--}{\matchedpair{0.9953}{0.982}} & \ifcsvstrcmp{0.5820}{matched}{\matchedcellslash{Pythia}}{\ifcsvstrcmp{0.5820}{--}{--}{\replacementpair{}{0.5820}{0.018}}} & \ifcsvstrcmp{0.4687}{matched}{\matchedcellslash{Qwen}}{\ifcsvstrcmp{0.4687}{--}{--}{\replacementpair{}{0.4687}{0.012}}} & \ifcsvstrcmp{matched}{matched}{\matchedcellslash{OpenLLaMA}}{\ifcsvstrcmp{matched}{--}{--}{\replacementpair{}{matched}{}}} & \ifcsvstrcmp{0.5016}{matched}{\matchedcellslash{OPT}}{\ifcsvstrcmp{0.5016}{--}{--}{\replacementpair{}{0.5016}{0.016}}} & \ifcsvstrcmp{0.4581}{matched}{\matchedcellslash{GPT-Neo}}{\ifcsvstrcmp{0.4581}{--}{--}{\replacementpair{}{0.4581}{0.016}}} & \ifcsvstrcmp{0.3501}{matched}{\matchedcellslash{TinyLLaMA}}{\ifcsvstrcmp{0.3501}{--}{--}{\replacementpair{}{0.3501}{0.006}}}\\\addlinespace[0pt]
\ifcsvstrcmp{1}{1}{\addlinespace[3pt]}{}LFQA & Pythia & \ifcsvstrcmp{0.9987}{--}{--}{\matchedpair{0.9987}{0.994}} & \ifcsvstrcmp{matched}{matched}{\matchedcellslash{Pythia}}{\ifcsvstrcmp{matched}{--}{--}{\replacementpair{}{matched}{}}} & \ifcsvstrcmp{0.5069}{matched}{\matchedcellslash{Qwen}}{\ifcsvstrcmp{0.5069}{--}{--}{\replacementpair{}{0.5069}{0.006}}} & \ifcsvstrcmp{0.5674}{matched}{\matchedcellslash{OpenLLaMA}}{\ifcsvstrcmp{0.5674}{--}{--}{\replacementpair{}{0.5674}{0.014}}} & \ifcsvstrcmp{0.5384}{matched}{\matchedcellslash{OPT}}{\ifcsvstrcmp{0.5384}{--}{--}{\replacementpair{}{0.5384}{0.012}}} & \ifcsvstrcmp{0.4027}{matched}{\matchedcellslash{GPT-Neo}}{\ifcsvstrcmp{0.4027}{--}{--}{\replacementpair{}{0.4027}{0.000}}} & \ifcsvstrcmp{0.5226}{matched}{\matchedcellslash{TinyLLaMA}}{\ifcsvstrcmp{0.5226}{--}{--}{\replacementpair{}{0.5226}{0.010}}}\\\addlinespace[0pt]
\ifcsvstrcmp{0}{1}{\addlinespace[3pt]}{} & Qwen & \ifcsvstrcmp{0.9949}{--}{--}{\matchedpair{0.9949}{0.980}} & \ifcsvstrcmp{0.5132}{matched}{\matchedcellslash{Pythia}}{\ifcsvstrcmp{0.5132}{--}{--}{\replacementpair{}{0.5132}{0.014}}} & \ifcsvstrcmp{matched}{matched}{\matchedcellslash{Qwen}}{\ifcsvstrcmp{matched}{--}{--}{\replacementpair{}{matched}{}}} & \ifcsvstrcmp{0.5784}{matched}{\matchedcellslash{OpenLLaMA}}{\ifcsvstrcmp{0.5784}{--}{--}{\replacementpair{}{0.5784}{0.010}}} & \ifcsvstrcmp{0.5671}{matched}{\matchedcellslash{OPT}}{\ifcsvstrcmp{0.5671}{--}{--}{\replacementpair{}{0.5671}{0.010}}} & \ifcsvstrcmp{0.4335}{matched}{\matchedcellslash{GPT-Neo}}{\ifcsvstrcmp{0.4335}{--}{--}{\replacementpair{}{0.4335}{0.004}}} & \ifcsvstrcmp{0.5153}{matched}{\matchedcellslash{TinyLLaMA}}{\ifcsvstrcmp{0.5153}{--}{--}{\replacementpair{}{0.5153}{0.010}}}\\\addlinespace[0pt]
\ifcsvstrcmp{0}{1}{\addlinespace[3pt]}{} & OpenLLaMA & \ifcsvstrcmp{0.9982}{--}{--}{\matchedpair{0.9982}{0.990}} & \ifcsvstrcmp{0.5319}{matched}{\matchedcellslash{Pythia}}{\ifcsvstrcmp{0.5319}{--}{--}{\replacementpair{}{0.5319}{0.010}}} & \ifcsvstrcmp{0.5322}{matched}{\matchedcellslash{Qwen}}{\ifcsvstrcmp{0.5322}{--}{--}{\replacementpair{}{0.5322}{0.012}}} & \ifcsvstrcmp{matched}{matched}{\matchedcellslash{OpenLLaMA}}{\ifcsvstrcmp{matched}{--}{--}{\replacementpair{}{matched}{}}} & \ifcsvstrcmp{0.5503}{matched}{\matchedcellslash{OPT}}{\ifcsvstrcmp{0.5503}{--}{--}{\replacementpair{}{0.5503}{0.016}}} & \ifcsvstrcmp{0.3711}{matched}{\matchedcellslash{GPT-Neo}}{\ifcsvstrcmp{0.3711}{--}{--}{\replacementpair{}{0.3711}{0.006}}} & \ifcsvstrcmp{0.5538}{matched}{\matchedcellslash{TinyLLaMA}}{\ifcsvstrcmp{0.5538}{--}{--}{\replacementpair{}{0.5538}{0.016}}}\\\addlinespace[0pt]
\bottomrule
\end{tabular}
}
\endgroup
\par\smallskip\begin{minipage}{\linewidth}\footnotesize
The matched reference is shown in the Native column; a diagonal stroke marks that same tokenizer, not a missing result.
\end{minipage}
\end{table}

\endgroup
Tables~\ref{tab:e02-exp} and~\ref{tab:e02-its} give the same matrices for EXP and
ITS. Matched detection reaches AUROC of at least $0.98$ in every group, whereas the
replacement cells average AUROC $0.502$ for EXP and $0.481$ for ITS, with TPR of
$0.014$ and $0.009$.

\subsection{Encoder and generator replacement for semantic methods}
\label{sec:encoder-replacement}
\label{sec:generator-replacement}
Semantic watermarks score text in the embedding space of a specified encoder,
and their detectors store artifacts derived from that space, such as cluster
centers or a calibrated similarity interval. This experiment holds the generated
text fixed and replaces only the encoder used at detection, leaving every stored
artifact unchanged. Each method is evaluated with the five encoders common to all
methods; the matched embedding model is the one used at generation. A replacement whose
output dimension differs from that of a stored artifact defines no detector and
is reported as undefined rather than as a score of zero.

Table~\ref{tab:e5-author-c4} reports the study on C4 with each method's author
generation model. The matched embedding model detects the watermark reliably. After
replacement, TPR falls in every scoreable configuration. Ranking performance varies across replacements: most AUROC values are near or below chance, while SimMark
with instructor-base in place of
instructor-large retains partial discrimination. Both encoders belong to the instructor model family
\citep{su2023instructor,ni2022gtr}, suggesting that the replacement preserves
some of the embedding-space structure used by the detector.
\begin{table}[t]
\centering\small
\caption{Detection-encoder replacement on C4 with each method's author generation model. The matched embedding model is the one used at generation. TPR cells below $0.2$ are shaded pale yellow.}
\label{tab:e5-author-c4}
\begin{tabular}{@{}lllrr@{}}
\toprule
Method & Detection encoder & Relation & AUROC & TPR \\
\midrule
\ifcsvstrcmp{0}{1}{\midrule}{}SimMark & instructor-large & matched & 0.9889 & \lowtprcell{0.9178}\\
\ifcsvstrcmp{0}{1}{\midrule}{} & semstamp-c4-sbert & replacement & 0.3448 & \lowtprcell{0.0080}\\
\ifcsvstrcmp{0}{1}{\midrule}{} & instructor-base & replacement & 0.9051 & \lowtprcell{0.3006}\\
\ifcsvstrcmp{0}{1}{\midrule}{} & mpnet-base-v2 & replacement & 0.2982 & \lowtprcell{0.0240}\\
\ifcsvstrcmp{0}{1}{\midrule}{} & MiniLM-L6-v2 & replacement & 0.3780 & \lowtprcell{0.0341}\\
\ifcsvstrcmp{1}{1}{\midrule}{}SemStamp & semstamp-c4-sbert & matched & 0.9996 & \lowtprcell{0.9914}\\
\ifcsvstrcmp{0}{1}{\midrule}{} & instructor-large & replacement & 0.4718 & \lowtprcell{0.0172}\\
\ifcsvstrcmp{0}{1}{\midrule}{} & instructor-base & replacement & 0.4517 & \lowtprcell{0.0237}\\
\ifcsvstrcmp{0}{1}{\midrule}{} & mpnet-base-v2 & replacement & 0.6314 & \lowtprcell{0.1013}\\
\ifcsvstrcmp{0}{1}{\midrule}{} & MiniLM-L6-v2 & replacement & 0.4741 & \lowtprcell{0.0108}\\
\ifcsvstrcmp{1}{1}{\midrule}{}k-SemStamp & semstamp-c4-sbert & matched & 0.9938 & \lowtprcell{0.9660}\\
\ifcsvstrcmp{0}{1}{\midrule}{} & instructor-large & replacement & 0.4851 & \lowtprcell{0.0140}\\
\ifcsvstrcmp{0}{1}{\midrule}{} & instructor-base & replacement & 0.5131 & \lowtprcell{0.0180}\\
\ifcsvstrcmp{0}{1}{\midrule}{} & mpnet-base-v2 & replacement & 0.7316 & \lowtprcell{0.0780}\\
\ifcsvstrcmp{0}{1}{\midrule}{} & MiniLM-L6-v2 & replacement & -- & \lowtprcell{--}\\
\ifcsvstrcmp{1}{1}{\midrule}{}AliMark & mpnet-base-v2 & matched & 0.9999 & \lowtprcell{0.9980}\\
\ifcsvstrcmp{0}{1}{\midrule}{} & semstamp-c4-sbert & replacement & 0.8452 & \lowtprcell{0.1880}\\
\ifcsvstrcmp{0}{1}{\midrule}{} & instructor-large & replacement & 0.4281 & \lowtprcell{0.0040}\\
\ifcsvstrcmp{0}{1}{\midrule}{} & instructor-base & replacement & 0.6267 & \lowtprcell{0.0020}\\
\ifcsvstrcmp{0}{1}{\midrule}{} & MiniLM-L6-v2 & replacement & -- & \lowtprcell{--}\\
\bottomrule
\end{tabular}

\par\smallskip\begin{minipage}{.94\linewidth}\footnotesize
The k-SemStamp/MiniLM-L6-v2 and AliMark/MiniLM-L6-v2 detectors are structurally undefined (768-dimensional frozen centers or secret vectors versus a 384-dimensional replacement), not zero. Operating points use the same-pool empirical 1\% threshold.
\end{minipage}
\end{table}

Table~\ref{tab:pythia-c4-wiki} evaluates the semantic methods with Pythia-1.4B
as a general-purpose generator, keeping the authors' watermark parameters
unchanged. For SimMark, SemStamp, and k-SemStamp, this replaces the fine-tuned
OPT-1.3B released with SemStamp \citep{hou2024semstamp}. Matched detection
performance declines for some methods; SimMark's TPR on C4 decreases from
$0.92$ to $0.62$. The Wiki rows extend the matched-detection evaluation to a
second dataset with the same generator and watermark parameters.

On the Pythia-generated C4 texts, encoder replacement also reduces TPR in every scoreable configuration.
\begin{table}[t]
\centering\small
\caption{Semantic methods with Pythia-1.4B as the generation model and author
watermark parameters. On C4 the detection encoder is replaced; on Wiki only the
matched embedding model is evaluated. The matched embedding model is the one used at generation. TPR cells below $0.2$ are shaded pale yellow.}
\label{tab:pythia-c4-wiki}
\begin{tabular}{@{}llllrr@{}}
\toprule
Method & Data & Detection encoder & Relation & AUROC & TPR \\
\midrule
\ifcsvstrcmp{0}{1}{\midrule}{}SimMark & C4 & instructor-large & matched & 0.9733 & \lowtprcell{0.6213}\\
\ifcsvstrcmp{0}{1}{\midrule}{} & Wiki & instructor-large & matched & 0.9968 & \lowtprcell{0.9299}\\
\ifcsvstrcmp{0}{1}{\midrule}{} & C4 & semstamp-c4-sbert & replacement & 0.4701 & \lowtprcell{0.0021}\\
\ifcsvstrcmp{0}{1}{\midrule}{} & C4 & instructor-base & replacement & 0.8592 & \lowtprcell{0.1444}\\
\ifcsvstrcmp{0}{1}{\midrule}{} & C4 & mpnet-base-v2 & replacement & 0.2650 & \lowtprcell{0.0084}\\
\ifcsvstrcmp{0}{1}{\midrule}{} & C4 & MiniLM-L6-v2 & replacement & 0.2492 & \lowtprcell{0.0167}\\
\ifcsvstrcmp{1}{1}{\midrule}{}SemStamp & C4 & semstamp-c4-sbert & matched & 0.9946 & \lowtprcell{0.9505}\\
\ifcsvstrcmp{0}{1}{\midrule}{} & Wiki & semstamp-c4-sbert & matched & 0.9964 & \lowtprcell{0.9281}\\
\ifcsvstrcmp{0}{1}{\midrule}{} & C4 & instructor-large & replacement & 0.5116 & \lowtprcell{0.0113}\\
\ifcsvstrcmp{0}{1}{\midrule}{} & C4 & instructor-base & replacement & 0.5540 & \lowtprcell{0.0135}\\
\ifcsvstrcmp{0}{1}{\midrule}{} & C4 & mpnet-base-v2 & replacement & 0.6104 & \lowtprcell{0.0383}\\
\ifcsvstrcmp{0}{1}{\midrule}{} & C4 & MiniLM-L6-v2 & replacement & 0.5396 & \lowtprcell{0.0045}\\
\ifcsvstrcmp{1}{1}{\midrule}{}k-SemStamp & C4 & semstamp-c4-sbert & matched & 0.9872 & \lowtprcell{0.9130}\\
\ifcsvstrcmp{0}{1}{\midrule}{} & Wiki & semstamp-c4-sbert & matched & 0.9901 & \lowtprcell{0.8621}\\
\ifcsvstrcmp{0}{1}{\midrule}{} & C4 & instructor-large & replacement & 0.5706 & \lowtprcell{0.0081}\\
\ifcsvstrcmp{0}{1}{\midrule}{} & C4 & instructor-base & replacement & 0.6033 & \lowtprcell{0.0101}\\
\ifcsvstrcmp{0}{1}{\midrule}{} & C4 & mpnet-base-v2 & replacement & 0.6740 & \lowtprcell{0.0425}\\
\ifcsvstrcmp{0}{1}{\midrule}{} & C4 & MiniLM-L6-v2 & replacement & -- & \lowtprcell{--}\\
\ifcsvstrcmp{1}{1}{\midrule}{}AliMark & C4 & mpnet-base-v2 & matched & 0.9969 & \lowtprcell{0.9940}\\
\ifcsvstrcmp{0}{1}{\midrule}{} & Wiki & mpnet-base-v2 & matched & 1.0000 & \lowtprcell{1.0000}\\
\ifcsvstrcmp{0}{1}{\midrule}{} & C4 & semstamp-c4-sbert & replacement & 0.8273 & \lowtprcell{0.1860}\\
\ifcsvstrcmp{0}{1}{\midrule}{} & C4 & instructor-large & replacement & 0.4053 & \lowtprcell{0.0000}\\
\ifcsvstrcmp{0}{1}{\midrule}{} & C4 & instructor-base & replacement & 0.6514 & \lowtprcell{0.0000}\\
\ifcsvstrcmp{0}{1}{\midrule}{} & C4 & MiniLM-L6-v2 & replacement & -- & \lowtprcell{--}\\
\bottomrule
\end{tabular}

\par\smallskip\begin{minipage}{.94\linewidth}\footnotesize
A dash denotes a structurally undefined detector: for k-SemStamp and AliMark, MiniLM-L6-v2 produces 384-dimensional embeddings incompatible with the stored 768-dimensional centers or secret vectors.
\end{minipage}
\end{table}

\section{Statistical correction over candidate tokenizers}
\label{sec:search}
\sectionlead{A token-level detector needs the tokenizer that produced the text,
and Section~\ref{sec:core} shows that with a wrong tokenizer, detection falls
to near chance level. In
practice the text often arrives without that information: the model behind an
online service is undisclosed, or the text has been collected without source
metadata. A verifier may still know the watermarking method and have access to
a finite set of plausible tokenizers, such as those of the public models in
use. This section analyzes that setting: the method is known, the generating
tokenizer is unknown, and the verifier scores the text under each of $K$
candidate tokenizers, obtaining $s_j(x)$ for candidate $j$, and must decide
from these $K$ scores together.}

\subsection{Why search requires correction}
A search declares a positive result when any candidate test rejects. Its
rejection event is the union of the candidate rejection events, so applying a
single-test threshold to every candidate can raise the overall false-positive
rate above the nominal level. Calibration must therefore account for the
combined decision.

\subsection{Calibration of the combined decision}
If each candidate supplies a valid null $p$-value $p_j$, so that
$\Pr_0(p_j\leq t)\leq t$ for $0\leq t\leq1$, Bonferroni correction rejects
when $\min_j p_j\leq\alpha/K$. The union bound gives
\[
 \Pr_0\!\left(\min_j p_j\leq\alpha/K\right)
 \leq\sum_{j=1}^K\Pr_0(p_j\leq\alpha/K)\leq\alpha.
\]
This guarantee requires neither independence nor identical score distributions.
When valid candidate $p$-values are unavailable, one can instead use
$M(x)=\max_j s_j(x)$ and calibrate its threshold from the joint null score
distribution. A threshold $\tau$ satisfying $\Pr_0(M>\tau)\leq\alpha$ controls
the combined false-positive rate. Empirically, the verifier computes all
candidate scores on each calibration text, takes their maximum, and uses the
empirical quantile of these maxima as the threshold.

\subsection{Detection cost of candidate search}
If the generating tokenizer is included, Bonferroni tests it at level $\alpha/K$ rather than $\alpha$, and
maximum-score calibration sets a threshold on $M$ that cannot be lower than the
single-test threshold, since adding candidates cannot decrease the maximum on
any text. As $K$ increases, the per-candidate level becomes smaller, which can reduce detection power relative to direct detection with the known tokenizer.

If the generating tokenizer is absent, the correction still controls the
false-positive rate, but no candidate reads the text in the representation that carries the watermark, so the search has almost no detection power.

\subsection{Empirical false-positive inflation}
The size of the cost depends on how much null variation the candidate scores
share. If the candidates' rejection events at their calibrated levels were identical, the search would cost nothing; if they were independent, the
search would pay the full multiplicity price, with family-wise rate
$1-(1-\alpha/K)^K$ for candidate-wise level $\alpha/K$, which differs from
$\alpha$ only at second order in $\alpha$.

Table~\ref{tab:search-main} in Section~\ref{sec:exp-replacement} reports the false-positive rate of an uncorrected search over the $K=6$ candidate tokenizers of Section~\ref{sec:core}; it is close to the value for independent candidates. 

\ByteMark does not require this search, because its detector operates directly
on the text's bytes and has no tokenizer to identify.


\section{Watermark signal continuity and preservation}
\label{sec:composition}
\sectionlead{This section gives the setup of the relay experiment whose results appear in Section~\ref{sec:exp-composition}. A
common byte interface allows several generators to continue one watermark
signal; a token-level detector, tied to one tokenizer, follows the signal only
through the segments written with that tokenizer.}

\subsection{Motivation}
The relay experiment tests whether one fixed detector can recover watermark evidence throughout a document generated by several models under the same key.

For a token-level watermark this is not ensured by the shared key alone. The
generation bias and the detection statistic are both defined on the token
sequence of a particular tokenizer, so the signal written by one model is only
visible to a detector that retokenizes the text the same way. When the models
in a relay use different tokenizers, a detector fixed to one of them reads the
other segments as unmarked text, and the signal it can recover is limited to
the share of the document written with its own tokenizer. Searching over
candidate tokenizers does not remove this requirement. The verifier must have
access to a candidate set known to include the tokenizers of all contributing
models and must correct for multiple testing (Appendix~\ref{sec:search}); even
with complete coverage, each fixed-tokenizer detector primarily recovers signal from the segments generated with its own tokenizer. Without assured coverage, a negative search result cannot rule out
a watermark tied to an omitted tokenizer.

\ByteMark defines both the bias and the statistic on the UTF-8 bytes of the
text. Every model in a relay scores its candidate tokens through the bytes
they contribute, and the byte context that the PRF conditions on is carried
across the model boundary. The detector therefore reads the relayed document
as one byte stream and does not need to know that a switch occurred.

\subsection{Experimental setup}
\paragraph{Relay generation.}
Three models generate one C4 completion in turn: Pythia-1.4B writes the first
segment from the original prompt, Qwen2.5-1.5B continues the accumulated text,
and OpenLLaMA-3B continues it again. Each stage receives the full text produced
so far as its prompt and generates up to 100 native tokens under its own
tokenizer.
The cohort is 500 original requests per method, drawn from the same frozen C4
prompts as the single-model experiments, with rejected or incomplete stages
recorded and excluded rather than replaced. The methods compared are \ByteMark,
KGW, and ITS, each with the configuration of Section~\ref{sec:protocol}, and
each model in the chain embeds the same key.

\paragraph{Detection.}
\ByteMark is detected once, with the tokenizer-free detector of
Section~\ref{sec:algorithm}, over the whole completion. KGW and ITS are
detected with one fixed tokenizer over the whole completion, once for each of
the three tokenizers in the chain, so that every detector reads all three
segments in the representation of one participant. Prompts are excluded from
detection in every case.

\paragraph{Local signal.}
To show where in the document each detector finds the watermark, we compute
a local statistic over windows of five consecutive tokens and plot its mean
across texts against the normalized position in the document. Each of the
three segments is mapped to unit length on the horizontal axis, so model
switches fall at positions one and two. KGW and ITS count tokens of their
detector tokenizer and use their respective standardized statistics on the
tokens in the window, with the full-text context; for ITS the per-token key
distances are standardized under the key alignment chosen by the
document-level detector. \ByteMark counts the generating model's tokens and
applies the byte statistic of Equation~\ref{eq:z} to the bytes they cover,
counting each distinct context--byte pair once. Windows that would extend past
the document are omitted, and a mean is shown only where at least half of the
texts contribute a window. The dotted references in the figure mark the ideal
null value of each statistic and are not decision thresholds; the curves
describe the local signal and are not detection rates.

\section{Editing and rewriting attacks}
\label{sec:attacks}
\sectionlead{We evaluate watermark detection after lightweight edits and
paraphrasing, and separately examine re-embedding during rewriting.}

\subsection{Lightweight editing attacks}
\label{sec:lightweight-attacks}
We evaluate two lightweight edits on all accepted C4 completions from the
three source models. \emph{Word deletion} splits the completion at whitespace
and deletes each word independently with probability $0.15$. \emph{Token
insertion} encodes the completion with the shared reference tokenizer
(Pythia-1.4B), inserts $\lfloor 0.15\,n\rfloor$ tokens drawn uniformly from
the vocabulary, excluding special tokens, at uniformly chosen positions among
the $n$ tokens, and decodes the result. Each completion uses its own fixed
random seed, so every method's text is edited by the same operations. The
edited text is then detected by each method's own detector.
Table~\ref{tab:expanded-c4-attacks} reports the
results. For \ByteMark, KGW, SynthID, and EXP the AUC remains above $0.95$
after both edits, so watermarked and human text remain well separated; ITS is
affected more. A high AUROC after editing does not imply a high TPR at the threshold calibrated on clean text. A deployment
that expects edited text can instead detect with a windowed maximum statistic
such as WinMax \citep{kirchenbauer2024reliability}, with a threshold
recalibrated for that statistic.
\begin{table}[H]
\centering
\small
\caption{C4 word-deletion and token-insertion results. Each cell is the AUC
after the transformation.}
\label{tab:expanded-c4-attacks}
\maxsizebox{\linewidth}{!}{%
\begin{tabular}{llcc}
\toprule
Method & Source model & Word deletion & Token insertion \\
\midrule
\ifcsvstrcmp{0}{1}{\midrule}{}\ifcsvstrcmp{ByteMark}{ByteMark}{\ByteMark}{ByteMark} & Pythia & 0.9994 & 0.9976\\
\ifcsvstrcmp{0}{1}{\midrule}{}\ifcsvstrcmp{ByteMark}{ByteMark}{\ByteMark}{ByteMark} & Qwen2.5 & 0.9995 & 0.9982\\
\ifcsvstrcmp{0}{1}{\midrule}{}\ifcsvstrcmp{ByteMark}{ByteMark}{\ByteMark}{ByteMark} & OpenLLaMA & 0.9978 & 0.9944\\
\ifcsvstrcmp{1}{1}{\midrule}{}\ifcsvstrcmp{SynthID}{ByteMark}{\ByteMark}{SynthID} & Pythia & 0.9998 & 0.9987\\
\ifcsvstrcmp{0}{1}{\midrule}{}\ifcsvstrcmp{SynthID}{ByteMark}{\ByteMark}{SynthID} & Qwen2.5 & 1.0000 & 0.9999\\
\ifcsvstrcmp{0}{1}{\midrule}{}\ifcsvstrcmp{SynthID}{ByteMark}{\ByteMark}{SynthID} & OpenLLaMA & 0.9992 & 0.9999\\
\ifcsvstrcmp{1}{1}{\midrule}{}\ifcsvstrcmp{KGW}{ByteMark}{\ByteMark}{KGW} & Pythia & 0.9985 & 0.9992\\
\ifcsvstrcmp{0}{1}{\midrule}{}\ifcsvstrcmp{KGW}{ByteMark}{\ByteMark}{KGW} & Qwen2.5 & 0.9999 & 0.9999\\
\ifcsvstrcmp{0}{1}{\midrule}{}\ifcsvstrcmp{KGW}{ByteMark}{\ByteMark}{KGW} & OpenLLaMA & 0.9997 & 0.9996\\
\ifcsvstrcmp{1}{1}{\midrule}{}\ifcsvstrcmp{EXP}{ByteMark}{\ByteMark}{EXP} & Pythia & 0.9957 & 0.9685\\
\ifcsvstrcmp{0}{1}{\midrule}{}\ifcsvstrcmp{EXP}{ByteMark}{\ByteMark}{EXP} & Qwen2.5 & 0.9981 & 0.9573\\
\ifcsvstrcmp{0}{1}{\midrule}{}\ifcsvstrcmp{EXP}{ByteMark}{\ByteMark}{EXP} & OpenLLaMA & 0.9972 & 0.9577\\
\ifcsvstrcmp{1}{1}{\midrule}{}\ifcsvstrcmp{ITS}{ByteMark}{\ByteMark}{ITS} & Pythia & 0.6871 & 0.6092\\
\ifcsvstrcmp{0}{1}{\midrule}{}\ifcsvstrcmp{ITS}{ByteMark}{\ByteMark}{ITS} & Qwen2.5 & 0.7790 & 0.6091\\
\ifcsvstrcmp{0}{1}{\midrule}{}\ifcsvstrcmp{ITS}{ByteMark}{\ByteMark}{ITS} & OpenLLaMA & 0.7280 & 0.5931\\
\bottomrule
\end{tabular}
}
\end{table}

\subsection{Re-embedding and survival under rewriting}
\label{sec:rewriting}
We evaluate rewriting under two conditions. Without re-embedding, a non-participating rewriter adds no watermark, and detection measures survival of the original watermark. With \ByteMark re-embedding, a participating rewriter embeds the same keyed signal while it rewrites and carries its byte context across window boundaries, and detection evaluates the newly embedded signal. Both conditions use the same source completions; the watermarked-rewriter results are discussed in Section~\ref{sec:exp-composition}, and the unwatermarked-rewriter results below.
We use three rewriters on the 200-reference-token C4/Pythia-1.4B cohort;
Parrot \citep{damodaran2021parrot}
rewrites the completions sentence by sentence. DIPPER-XXL
\citep{krishna2023paraphrasing} rewrites the
Pythia-1.4B completions in successive sentence windows with lexical diversity
60 and order diversity 20. It uses nucleus sampling ($p=0.75$, temperature 1, no top-$k$ filtering) in BF16. Each request has a single budget of 300 new decoder tokens, and text reaching this limit
is retained as is. Each document and window uses an independently seeded
random stream.
Rewritten outputs
are detected in full with each method's detector unchanged, and we report
AUROC, which needs no threshold, on the valid outputs of each condition;
generation masks and truncation indicators are retained with the data.

\paragraph{Instruction-model rewriting.}
The third rewriter is Mistral-7B-Instruct-v0.3 on the same source cohort. The prompt requests a
rewrite preserving facts, meaning, and approximate length, without a preface.
Generation uses BF16, nucleus sampling with $p=0.75$, temperature~1, no
top-$k$ filtering, an independently seeded stream per document, and a cap of
300 new tokens. The non-participating rewriter adds no watermark; the \ByteMark
 participating rewriter uses the same key and $h=8$, $\gamma=0.25$, $\delta=2$ configuration
as the source watermark. Detection reads the complete rewritten output.
Table~\ref{tab:mistral-rewriting} reports AUROC under all three rewriters;
under Mistral, \ByteMark attains $0.989$ with the participating rewriter compared
with $0.835$ with the non-participating rewriter.
\begin{table}[H]
\centering\small
\caption{AUROC after rewriting C4/Pythia-1.4B completions with three
rewriters. The rewriter adds no watermark except in the ByteMark (participating)
row, where it re-embeds the \ByteMark watermark.}
\label{tab:parrot-rewriting}
\label{tab:dipper-rewriting}
\label{tab:mistral-rewriting}
\begin{tabular}{lccc}
\toprule
Method & Parrot & DIPPER & Mistral-7B \\
\midrule
ByteMark (participating) & \ifcsvstrcmp{ByteMark (participating)}{ByteMark (participating)}{\textbf{0.9985} & \textbf{0.9998} & \textbf{0.9888}}{0.9985 & 0.9998 & 0.9888}\\
ByteMark (non-participating) & \ifcsvstrcmp{ByteMark (non-participating)}{ByteMark (participating)}{\textbf{0.9690} & \textbf{0.7499} & \textbf{0.8348}}{0.9690 & 0.7499 & 0.8348}\\
KGW & \ifcsvstrcmp{KGW}{ByteMark (participating)}{\textbf{0.9791} & \textbf{0.8635} & \textbf{0.8990}}{0.9791 & 0.8635 & 0.8990}\\
SynthID & \ifcsvstrcmp{SynthID}{ByteMark (participating)}{\textbf{0.9797} & \textbf{0.7036} & \textbf{0.7920}}{0.9797 & 0.7036 & 0.7920}\\
EXP & \ifcsvstrcmp{EXP}{ByteMark (participating)}{\textbf{0.9643} & \textbf{0.7327} & \textbf{0.8237}}{0.9643 & 0.7327 & 0.8237}\\
ITS & \ifcsvstrcmp{ITS}{ByteMark (participating)}{\textbf{0.5281} & \textbf{0.5459} & \textbf{0.6273}}{0.5281 & 0.5459 & 0.6273}\\
\bottomrule
\end{tabular}

\end{table}

\section{Other related work}
\label{sec:related-appendix}
Other approaches embed watermarks through formatting, lexical substitution, or prompting. Formatting and steganographic carriers hide a payload in whitespace or glyph choice
\citep{rizzo2016homoglyph,sato2023easymark}, lexical carriers substitute
synonyms under a keyed rule \citep{topkara2006hiding}, and in-context
watermarking asks the model to follow lexical, acrostic, or Unicode rules
stated in the prompt \citep{liu2026incontext}. Whitespace and glyph carriers
can be removed by ordinary copying and Unicode normalization; synonym substitution needs its dictionary at detection
and alters wording; prompted rules depend on the model's instruction
following, and a rule visible in the output can be inferred and suppressed by a
further instruction.

\section{Statistical diagnostics}
\label{sec:diagnostics}
\sectionlead{This section compares continuous and binary detection statistics,
examines scores across Unicode character classes, and evaluates false-positive
behavior on human text under fixed keys.}


\subsection{Continuous and binary statistics}
\label{sec:binary}
A natural alternative to the continuous statistic counts bytes whose scores
exceed $1-\gamma$. This standardized green-byte count follows the form of the
green-count statistic used in KGW \citep{kirchenbauer2023watermark} and is
computed from the same distinct byte scores:
\[
 z_{\mathrm{bin}}=
 \frac{\sum_{i=1}^m\mathbf 1[u_i>1-\gamma]-m\gamma}{\sqrt{m\gamma(1-\gamma)}}.
\]
The continuous statistic retains score magnitudes discarded by the binary count.
On the same generated texts, it achieves slightly higher mean AUROC and TPR
across the nine evaluated conditions.
\begin{table}[H]
\centering\small
\caption{Continuous and binary statistics on the same byte scores, using at most 200 reference tokens. Entries are AUROC / TPR, each at its own statistic's threshold.}
\label{tab:e9}
\maxsizebox{\linewidth}{!}{%
\begin{tabular}{lllcc}
\toprule
Domain & Generator & Continuous & Binary & $\Delta$ (Cont.\,$-$\,Bin.) \\
\midrule
\ifcsvstrcmp{Pythia}{Pythia}{\ifnum1>1 \midrule\fi}{}\ifcsvstrcmp{Pythia}{Pythia}{C4}{} & Pythia & 0.9997 / 0.996 & 0.9997 / 0.994 & 0.0000 / +0.002\\
\ifcsvstrcmp{Qwen}{Pythia}{\ifnum2>1 \midrule\fi}{}\ifcsvstrcmp{Qwen}{Pythia}{C4}{} & Qwen & 1.0000 / 0.998 & 0.9999 / 0.998 & +0.0001 / 0.000\\
\ifcsvstrcmp{OpenLLaMA}{Pythia}{\ifnum3>1 \midrule\fi}{}\ifcsvstrcmp{OpenLLaMA}{Pythia}{C4}{} & OpenLLaMA & 0.9989 / 0.992 & 0.9987 / 0.984 & +0.0002 / +0.008\\
\ifcsvstrcmp{Pythia}{Pythia}{\ifnum4>1 \midrule\fi}{}\ifcsvstrcmp{Pythia}{Pythia}{WIKI}{} & Pythia & 1.0000 / 1.000 & 0.9998 / 0.994 & +0.0002 / +0.006\\
\ifcsvstrcmp{Qwen}{Pythia}{\ifnum5>1 \midrule\fi}{}\ifcsvstrcmp{Qwen}{Pythia}{WIKI}{} & Qwen & 0.9976 / 0.978 & 0.9963 / 0.961 & +0.0013 / +0.017\\
\ifcsvstrcmp{OpenLLaMA}{Pythia}{\ifnum6>1 \midrule\fi}{}\ifcsvstrcmp{OpenLLaMA}{Pythia}{WIKI}{} & OpenLLaMA & 1.0000 / 1.000 & 0.9999 / 0.996 & +0.0001 / +0.004\\
\ifcsvstrcmp{Pythia}{Pythia}{\ifnum7>1 \midrule\fi}{}\ifcsvstrcmp{Pythia}{Pythia}{LFQA}{} & Pythia & 1.0000 / 1.000 & 0.9998 / 0.994 & +0.0002 / +0.006\\
\ifcsvstrcmp{Qwen}{Pythia}{\ifnum8>1 \midrule\fi}{}\ifcsvstrcmp{Qwen}{Pythia}{LFQA}{} & Qwen & 0.9996 / 0.990 & 0.9974 / 0.951 & +0.0022 / +0.039\\
\ifcsvstrcmp{OpenLLaMA}{Pythia}{\ifnum9>1 \midrule\fi}{}\ifcsvstrcmp{OpenLLaMA}{Pythia}{LFQA}{} & OpenLLaMA & 0.9999 / 0.998 & 0.9993 / 0.986 & +0.0006 / +0.012\\
\bottomrule
\end{tabular}
}
\par\vspace{2pt}\footnotesize $\Delta$ reports continuous minus binary AUROC / TPR, computed from the displayed values; positive values favor continuous detection.
\end{table}

This ablation changes only the detection statistic, keeping the ByteMark-generated
texts fixed.

\subsection{Signal coverage across character classes}
For each Unicode character in the accepted completions, we average the scores
of its constituent UTF-8 bytes. We then group characters by class and compute
the character count and mean character score for each group. Table~\ref{tab:e10}
reports these means separately for C4, Wiki, and LFQA, together with the overall
means weighted by character counts across corpora. The null reference mean is
$0.5$.
\begin{table}[t]
\centering
\small
\caption{Mean character scores by Unicode class across nine generation conditions. The null reference mean is $0.5$.}
\label{tab:e10}
\maxsizebox{\linewidth}{!}{%
\begin{tabular}{lrrrrr}
\toprule
Class & Characters & Mean $u$ & C4 & Wiki & LFQA \\
\midrule
ASCII letter & \tablecount{1456711} & 0.5764 & 0.5756 & 0.5768 & 0.5767\\
Whitespace & \tablecount{324076} & 0.5124 & 0.5114 & 0.5106 & 0.5151\\
ASCII punctuation & \tablecount{65740} & 0.6258 & 0.6258 & 0.6270 & 0.6245\\
Digit & \tablecount{21553} & 0.6216 & 0.6202 & 0.6201 & 0.6257\\
Other & \tablecount{3239} & 0.5925 & 0.5881 & 0.6015 & 0.5939\\
Other letter & \tablecount{1065} & 0.5841 & 0.5819 & 0.5938 & 0.5483\\
CJK & \tablecount{868} & 0.5922 & 0.5921 & 0.6019 & 0.5769\\
Emoji / symbol & \tablecount{174} & 0.5710 & 0.5969 & 0.5582 & 0.5423\\
\bottomrule
\end{tabular}

}
\\[2pt]\footnotesize Each character contributes the mean of its byte scores. Overall means are weighted by character counts across corpora. Data are ByteMark-watermarked completions generated by Pythia-1.4B, Qwen2.5-1.5B, and OpenLLaMA-3B on C4, Wiki, and LFQA.
\end{table}

\subsection{Null moments and realized false-positive rates on human text}
\label{sec:null-realized}
Theorem~\ref{thm:compact-calibration} gives the null distribution of $z$ for
an ideal PRF and a fixed text. This subsection compares the observed moments and false-positive rates on human text with this ideal-null reference: whether the detection scores have the predicted mean and variance,
and how often the normal-approximation thresholds $z>2$, $z>3$, and $z>4$
reject without any calibration data. We score five human corpora (C4,
Wikipedia, LFQA, and the Chinese and German portions of mC4) under five keys
and report, for each corpus, the mean and standard deviation of $z$ and the
realized false-positive rate pooled over the keys
(Table~\ref{tab:e11}). Two statistics are compared: the deduplicated statistic of
Section~\ref{sec:algorithm}, which counts each distinct context--byte input
once, and an all-occurrences statistic that counts every occurrence and normalizes as if the occurrences were independent.

On most evaluated corpora for both statistics the mean of
$z$ is close to zero, the standard deviation close to one, and the realized
rates at $z>2$ and $z>3$ close to the nominal $2.28\%$ and $0.135\%$. The
exception is the all-occurrences statistic on the Chinese and, to a lesser
degree, German text, where repeated context--byte pairs are frequent and the
standard deviation rises to $1.33$ and $1.14$, with the rate at $z>2$ reaching
$5.1\%$ and $4.0\%$. The deduplicated statistic keeps the standard deviation within $0.03$ of one and the
realized rates at $1.9$--$2.4\%$ on every corpus, including the two where the
all-occurrences statistic departs from unit variance.
\begin{table}[t]
\centering\small
\caption{Detection-score moments on human text and the realized false-positive
rate at the normal-approximation thresholds $z>2$, $z>3$, and $z>4$, whose nominal
rates are 2.28\%, 0.135\%, and 0.003\%.}
\label{tab:e11}
\begin{tabular}{llrrrrr}
\toprule
& & & & \multicolumn{3}{c}{Realized FPR} \\
\cmidrule(lr){5-7}
Corpus & Statistic & Mean $\bar z$ & SD $s_z$ & $z>2$ & $z>3$ & $z>4$ \\
\midrule
\ifcsvstrcmp{0}{1}{\addlinespace[3pt]}{}C4 & \ifcsvstrcmp{merged}{merged}{Deduplicated}{All occurrences} & +0.0230 & 0.9863 & 0.0202 & 0.0012 & 0.0000\\
\ifcsvstrcmp{0}{1}{\addlinespace[3pt]}{} & \ifcsvstrcmp{independent}{merged}{Deduplicated}{All occurrences} & +0.0190 & 1.0332 & 0.0272 & 0.0020 & 0.0000\\
\ifcsvstrcmp{1}{1}{\addlinespace[3pt]}{}Wiki & \ifcsvstrcmp{merged}{merged}{Deduplicated}{All occurrences} & +0.0250 & 0.9735 & 0.0216 & 0.0008 & 0.0000\\
\ifcsvstrcmp{0}{1}{\addlinespace[3pt]}{} & \ifcsvstrcmp{independent}{merged}{Deduplicated}{All occurrences} & +0.0234 & 1.0005 & 0.0244 & 0.0004 & 0.0000\\
\ifcsvstrcmp{1}{1}{\addlinespace[3pt]}{}LFQA & \ifcsvstrcmp{merged}{merged}{Deduplicated}{All occurrences} & -0.0015 & 1.0117 & 0.0244 & 0.0020 & 0.0000\\
\ifcsvstrcmp{0}{1}{\addlinespace[3pt]}{} & \ifcsvstrcmp{independent}{merged}{Deduplicated}{All occurrences} & -0.0012 & 1.0535 & 0.0298 & 0.0024 & 0.0006\\
\ifcsvstrcmp{1}{1}{\addlinespace[3pt]}{}mC4 zh & \ifcsvstrcmp{merged}{merged}{Deduplicated}{All occurrences} & -0.0161 & 0.9819 & 0.0189 & 0.0008 & 0.0000\\
\ifcsvstrcmp{0}{1}{\addlinespace[3pt]}{} & \ifcsvstrcmp{independent}{merged}{Deduplicated}{All occurrences} & -0.0007 & 1.3303 & 0.0511 & 0.0131 & 0.0048\\
\ifcsvstrcmp{1}{1}{\addlinespace[3pt]}{}mC4 de & \ifcsvstrcmp{merged}{merged}{Deduplicated}{All occurrences} & +0.0176 & 0.9834 & 0.0211 & 0.0008 & 0.0000\\
\ifcsvstrcmp{0}{1}{\addlinespace[3pt]}{} & \ifcsvstrcmp{independent}{merged}{Deduplicated}{All occurrences} & +0.0200 & 1.1382 & 0.0395 & 0.0063 & 0.0000\\
\bottomrule
\end{tabular}

\par\smallskip
\begin{minipage}{.9\linewidth}\footnotesize
1,000 documents per corpus for C4, Wiki, and LFQA and 1,500 for mC4 Chinese
and German, truncated to 400 characters and scored under five keys; realized
rates pool the keys. Deduplicated counts each distinct context--byte input
once; All occurrences counts every occurrence and normalizes as if the
occurrences were independent.
\end{minipage}
\end{table}

\section{Computational cost and reproducibility}
\label{sec:implementation}
\sectionlead{We report generation cost and, in a separate experiment, the agreement between independent Python, JavaScript, and C++ detector implementations.}

\subsection{Generation cost}
\label{sec:generation-cost}
\paragraph{Measurement environment.}
All generation-efficiency measurements run on the same machine: one laptop with an
NVIDIA GeForce RTX 5090 Laptop GPU (24~GB), an AMD Ryzen AI 9 HX 370 CPU (24 threads),
and 32~GB of memory, under Ubuntu 26.04 LTS with Python 3.11, PyTorch 2.10.0 (CUDA
12.8), and Transformers 5.3.0. The machine runs no other workload during the
measurements; each method runs in its own process, one at a time, with the model
loaded once and all requests issued sequentially at batch size one. The machine
keeps fixed thermal limits throughout (CPU frequency capped near 3.0~GHz, GPU core
clock at most 2{,}000~MHz), identical for every method.

Throughput is accepted final-output tokens divided by generation-call time.
Timing includes final decode and CUDA synchronization but excludes model
loading, warm-up, method setup, and scheduling. The OpenLLaMA-3B measurements
use 50 frozen C4 requests, 200 native tokens, FP16, batch size one, and
identical model revisions and prompt-token IDs.

\subsection{Candidate work in sentence-level semantic methods}
Candidate-token work counts all sampled continuation tokens, including discarded
sentence candidates, per final-output token. It is a work measure rather than a
latency multiplier. Table~\ref{tab:main-generation-efficiency} compares all six
generation paths on the same 50 C4 texts with OpenLLaMA-3B on the same GPU at
batch size one. The sentence-level methods achieve lower throughput than
unwatermarked generation and sample many discarded candidate tokens for each
retained token, whereas \ByteMark samples no extra tokens and incurs the cost of
scoring the vocabulary.
Parallelizing candidates may alter latency while leaving discarded-token work a
relevant measure of inference demand.

\subsection{Detector agreement across implementations}
We evaluate detector reproducibility on 100 texts from the main C4/OpenLLaMA
experiment: the first 50 accepted watermarked completions and the first 50 human
texts in original sample order, selected without reference to their detection
scores. Independent JavaScript and C++ implementations are compared with the
Python detector on the same frozen texts and AES-CMAC key, with context width
$h=8$. Table~\ref{tab:cmac-agreement} summarizes the replayed texts and lists
each compared quantity.

\begin{table}[t]
\centering
\small
\caption{Detector agreement on 100 texts from the C4/OpenLLaMA experiment
(AES-CMAC, $h=8$). Top: the replayed texts. Bottom: for each compared quantity, the percentage of texts or byte positions at which
the JavaScript or C++ output is identical to that of the Python detector. All
comparisons are exact, without numerical tolerance.}
\label{tab:cmac-agreement}
\begin{tabular}{lrrrc}
\toprule
Texts & Count & Bytes & Distinct inputs & Median $z(x)$ [range] \\
\midrule
Watermarked & 50 & \tablecount{21326} & \tablecount{20866} &
 4.81\ [2.73, 7.13]\\
Human & 50 & \tablecount{22206} & \tablecount{21359} &
 0.12\ [-2.16, 2.38]\\
\bottomrule
\end{tabular}

\vspace{6pt}
\begin{tabular}{lrcc}
\toprule
& & \multicolumn{2}{c}{Match with Python} \\
\cmidrule(lr){3-4}
Compared quantity & Comparisons & JavaScript & C++ \\
\midrule
Statistic ID and key; byte and character counts & \tablecount{100}~texts & 100\% & 100\%\\
UTF-8 byte string & \tablecount{100}~texts & 100\% & 100\%\\
Distinct PRF inputs and largest multiplicity & \tablecount{100}~texts & 100\% & 100\%\\
Centered sum $T(x)$ (binary64) & \tablecount{100}~texts & 100\% & 100\%\\
Null variance $V(x)$ (binary64) & \tablecount{100}~texts & 100\% & 100\%\\
Score $z(x)$ (binary64) & \tablecount{100}~texts & 100\% & 100\%\\
Context window and PRF message & \tablecount{43532}~bytes & 100\% & 100\%\\
AES-CMAC digest (first 64 bits) & \tablecount{43532}~bytes & 100\% & 100\%\\
Byte score $u$ (binary64) & \tablecount{43532}~bytes & 100\% & 100\%\\
Deduplication group & \tablecount{43532}~bytes & 100\% & 100\%\\
\bottomrule
\end{tabular}

\end{table}

All compared quantities match exactly across Python, JavaScript, and C++ on the 100 evaluated texts, including the binary64 representations of $z(x)$.
This agreement is consistent with the standardized AES-CMAC primitive
\citep{rfc4493} and the fixed byte-encoding and numerical-conversion rules of
ByteMark.

\paragraph{User-side detection.}
The JavaScript detector can be integrated into a browser extension, an editor, or a reviewing tool. A verifier holding the key can run detection locally from the published text without a language model, a tokenizer, or a call to the provider.

\section{A byte-level extension to multilayer reweighting}
\label{sec:multilayer}
We outline how byte-level scores could be used in the multilayer reweighting of SynthID-Text \citep{dathathri2024synthid}. For each layer $d$, we replace its
token-based $g$-value with $R_d(v;c)$, the candidate's average byte score
computed from $u_b^{(d)}(c)$. The layer index is included in the PRF input.
The remaining reweighting follows the original construction, while detection
reconstructs the layer scores from the published bytes and counts each distinct
context--byte pair once per layer; a separately specified continuous-score statistic, such as $z$ of Equation~\ref{eq:compact-detector} computed over the distinct inputs of all layers, then aggregates the evidence. The calibration of the original SynthID detector, which assumes binary $g$-values, is not inherited.
The evaluation of this extension is left for future work.

\section{Limitations and Open Questions}
\label{sec:discussion}

\emph{Semantic robustness.} As with token-level watermarks, rewriting by a
non-participating model can weaken the \ByteMark signal by changing the byte stream
(Appendix~\ref{sec:attacks}). An open question is whether a fixed, public semantic representation could improve robustness to rewriting while preserving detection from the text, key, and public specification alone.

\emph{Key management.} Like other keyed watermarks, \ByteMark depends on the
secrecy of its key. Because one key serves every participating model, a wider deployment places the key with more parties and raises the risk of its loss, and a lost key exposes the watermark of every participant at once.

\end{document}